\newcommand{\ls}[1]
    {\dimen0=\fontdimen6\the\font
     \lineskip=#1\dimen0
     \advance\lineskip.5\fontdimen5\the\font
     \advance\lineskip-\dimen0
     \lineskiplimit=.9\lineskip
     \baselineskip=\lineskip
     \advance\baselineskip\dimen0
     \normallineskip\lineskip
     \normallineskiplimit\lineskiplimit
     \normalbaselineskip\baselineskip
     \ignorespaces
}
\newtheorem{definition}{Definition}
\newtheorem{proposition}{Proposition}
 \newtheorem{assumption}{Assumption}
 \newtheorem{problem}{Problem}
  \newtheorem{theorem}{Theorem}
\begin{document}
 \date{}
 \title{  Towards An Optimal Reliability-Stability Tradeoff  For Secure Coded Uplink Training in  Large-\\Scale  MISO-OFDM  Systems}
  \title{ Optimal Independence-Checking Coding For Secure  Uplink Training in  Large-Scale \\ MISO-OFDM  Systems}
\author{
\IEEEauthorblockN{Dongyang Xu\IEEEauthorrefmark{1}\IEEEauthorrefmark{2}, Pinyi Ren\IEEEauthorrefmark{1}\IEEEauthorrefmark{2}, and James A. Ritcey\IEEEauthorrefmark{3}}\\
 \IEEEauthorblockA{\IEEEauthorrefmark{1}School of Electronic and Information Engineering, Xi'an Jiaotong University, China} \\
\IEEEauthorblockA{\IEEEauthorrefmark{2}Shaanxi Smart Networks and Ubiquitous Access Research Center, China. }\\
\IEEEauthorblockA{\IEEEauthorrefmark{3}Department of Electrical Engineering, University of Washington, USA.}\\
\IEEEauthorblockA{E-mail: \{\emph {xudongyang@stu.xjtu.edu.cn, pyren@mail.xjtu.edu.cn, ritcey@ee.washington.edu}
\}}
%\thanks{
%\IEEEcompsocthanksitem D. Xu  and P. Ren (corresponding author)  are with the Department of Information and Communications Engineering, Xi'an Jiaotong University, Xi'an, Shaanxi 710049 China (E-mail: xudongyang@stu.xjtu.edu.cn,  pyren@mail.xjtu.edu.cn)  D. Xu and P. Ren  are also with National Simulation Education Center for Communications and Information Systems and Shaanxi Smart Networks and Ubiquitous Access Research Center.
%\IEEEcompsocthanksitem J. A. Ritcey are with the Department of Electrical Engineering, University of Washington, Seattle, WA 98195 USA (E-mail: jar7@uw.edu).
%}
}
\maketitle
\begin{abstract}
Due to  the \emph{publicly-known} \emph{deterministic} characteristic of pilot tones, pilot-aware attack, by  jamming,  nulling and spoofing pilot tones,   can significantly paralyze the  uplink channel training  in  large-scale MISO-OFDM systems. To solve this, we in this paper  develop an independence-checking coding based (ICCB)  uplink training  architecture for one-ring scattering scenarios allowing for uniform  linear arrays (ULA) deployment. Here,  we not only insert randomized pilots on subcarriers for  channel impulse response (CIR) estimation, but also  diversify and encode subcarrier activation patterns (SAPs) to convey those pilots simultaneously. The coded SAPs, though interfered by arbitrary unknown SAPs in wireless environment, are qualified  to be reliably  identified and decoded into the original pilots by checking  the  hidden channel independence existing in subcarriers.  Specifically, an independence-checking coding (ICC) theory is formulated to support the encoding/decoding  process in this  architecture.  The optimal ICC code  is further developed  for guaranteeing a well-imposed estimation of CIR  while maximizing the code rate. Based on this code, the identification error probability (IEP) is  characterized to  evaluate   the  reliability of this architecture.  Interestingly, we discover   the principle of   IEP reduction by exploiting the array  spatial correlation, and prove  that zero-IEP, i.e., perfect reliability,  can be  guaranteed  under  continuously-distributed mean angle of arrival (AoA).  Besides this, a novel closed form of  IEP expression is derived  in discretely-distributed case.  Simulation results  finally verify the effectiveness  of the proposed architecture.
\end{abstract}
\begin{IEEEkeywords}
Physical layer security,  pilot-aware attack, OFDM, channel estimation,   independence-checking coding.
\end{IEEEkeywords}
%\newpage
\section{Introduction}
%\ls{0.77}
\label{introduction}
Security paradigms  in wireless communications has  attracted increasing  attention with the evolution of air interface technology towards the requirements of  future 5G networks. In those envisioned scenarios,   multiple existing technologies, such as orthogonal frequency-division multiplexing (OFDM),  are closely integrated  with novel innovative attempts, such as large-scale multiple-antenna technique  or namely massive multiple-input, multiple-output (Massive MIMO)~\cite{Bogale}. And the phenomenon  accompanied by is  that the  imperishable  characteristic of wireless channels, such as the open and shared nature,  have always been rendering  those air interface technologies  vulnerable to growing  security  attacks, including the denial of service (DoS) attacks and tampering attacks, among others.  As a major manner of DoS attack, jamming attacks, in a variety of behaviors out of control,  have exhibited its  astonishing destructive power on those existing~\cite{Shahriar} and  emerging air interface  techniques~\cite{Pirzadeh}.

A very typical example is that  OFDM systems under large-scale antenna arrays  are very  suspectable to  the protocol-aware attack, a well-directed attack that can sense the specific protocols and intensify the effectiveness of attack  by jamming a physical layer mechanism instead of data payload directly. As a typical protocol-aware attack,  pilot-aware attack  could  hinder the regular channel training between legitimate transceiver pair.   This is done, in theory,  by  jamming/nulling/spoofing  the deterministic pilot tones which are known and shared on the time-frequency resource grid (TFRG) by  all parties for channel acquisition~\cite{Ozdemir,Clancy2,Xu}. That is to say,  pilot-aware attack could embrace three flexible modes, i.e., pilot tone jamming  (PTJ) attack~\cite{Clancy2}, pilot tone nulling  (PTN) attack~\cite{Clancy2} and pilot tone spoofing  (PTS) attack~\cite{Xu}. As  an  example of PTS attack in narrow-band single-carrier systems, pilot contamination (PC)  attack was  first introduced  and analysed  by~\cite{Zhou}. Following~\cite{Zhou}, many research   have been investigated on the advantage of  large-scale multi-antenna arrays on defending against  PC attack~\cite{Kapetanovic1,Tugnait1,Kapetanovic2}. However, those studies were limited to the attack  detection  by exploiting the physical layer information, such as auxiliary training or data sequences~\cite{Kapetanovic1,Tugnait1} and some prior-known channel information~\cite{Kapetanovic2}.

The  first attempt to resolve pilot aware attack was proposed for a conventional OFDM systems in~\cite{Shahriar2}, that is, transforming the PTN and PTS attack into   PTJ attack by  randomizing the  locations and values of regular pilot tones on TFRG. Assuming the independent subcarriers, authors in~\cite{Xu} proposed a frequency-domain subcarrier (FS) channel estimation framework under the PTS attack by exploiting pilot randomization and  independence component analysis (ICA). One key problem is that the practical subcarriers are not mutually independent in the scenarios with limited channel taps, and thus ICA does not apply in this case. What's most important is that the influence of SAPs on CIR estimation was not evaluated. Actually, when the so-called optimal code is adopted, its CIR estimation is extremely ill-imposed and unprecise.

This further motives us to  provide a secure  large-scale multi-antenna OFDM systems, with some necessary consideration, i.e., array spatial correlation, and redesign the overall  pilot sharing  process during the uplink channel training phase.   Most importantly, we have to redesign the supporting CIR estimation process.  Before that,  we have to admit that the pilot randomization technique, though necessary for resolving pilot-aware attack,  brings  to the process two  bottlenecks, i.e., unpredictable  attack modes  and non-recoverable pilot information covered by  random wireless  channels. Basically, an efficient hybrid attack is more likely to be the following:
\begin{problem}[\textbf{Attack Model}]
An attacker in  hybrid attack mode can  choose either  PTJ  mode or silence cheating (SC) mode for  intentional information  hiding as well.
\end{problem}
Two notes should be noticed, that is, 1) an attacker in  PTJ mode  could choose two behaviors, i. e., wide-band  pilot jamming (WB-PJ) attack and partial-band pilot  jamming (PB-PJ) attack. 2) the  attacker in SC mode turns to keep silent  for cheating  the legitimate node.  In this case,  though  the legitimate node adopts random pilots by supposing  the attacker  exists, the attacker actually does not  pay the price since it does not jam at all.

On the other hand, we could  further  identify the  second bottleneck  as follows:
\begin{problem}
Randomized pilots, if utilized for uplink channel training   through wireless channels, cannot be separated, let alone identified.
\end{problem}
 This issue refers to three fundamental concepts which are respectively  recognized  as  pilot  conveying, separation and   identification in this paper. Here,  the innovative methodology  we introduce   is:
 \emph{ Selectively activate and deactivate the OFDM subcar-
riers and create various SAP candidates. Diversify  SAPs to encode pilots and reuse those coded subcarriers carrying pilot information, to estimate the uplink channels simultaneously.}
 In what follows, the main contributions of this paper are  summarized:
\begin{enumerate}
  \item  First,  a deterministic and precise  encoding  principle is established such that  arbitrary SAPs  can be encoded as a binary code. The  ICC theory is then developed to further optimize the code such that  arbitrary two codewords in the code, if being superimposed on each other, can be  separated and identified reliably. Furthermore, an optimal  ICC codebook is formulated  with the maximum code rate while guaranteeing  a well-imposed CIR estimation.  Based on this code,  a reliable ICCB uplink training architecture  is finally built up  by  constructing an one-to-one mapping/demapping relationship between pilots, codewords and  SAPs.
  \item We  further  characterize the reliability of this architecture  as the identification error probability (IEP) and  discover  a hidden  phenomenon that  when  subcarrier estimations are performed on the basis of this architecture,  the array spatial correlation existing in the  subcarriers overlapped from the legitimate node and the attacker can further reduce IEP. At this point, the attacker can actually help the legitimate node to improve  the reliability.  Interestingly, it can also be proved that  zero IEP cannot be achieved only when  the attacker is located in  the  clusters with the same  mean AoA as the legitimate node. This principle, in theory, could facilitate   the acquisition of  the position of Ava.  If we consider the mean AoA with  continuous distribution, the reliability, in this sense, can be perfectly guaranteed. Otherwise,  for a  practical discrete distribution model, we again show how much  the reliability  could be further reinforced.
\end{enumerate}
The rest of the paper is summarized as follows. In section~\ref{PSA}, we present an overview of  pilot-aware attack on  multi-antenna OFDM  systems. In Section~\ref{PUTA}, we introduce an ICCB uplink training architecture. Channel estimation and identification enhancement  is described  in  Section~\ref{CEIE}. Numerical results are presented in Section~\ref{NR} and finally  we conclude our work in Section~\ref{Conclusions}.
\section{Overview of  Pilot-Aware Attack on  Multi-Antenna OFDM  Systems}
\label{PSA}
In this section, we will provide a basic overview  of  pilot-aware attack by introducing  three basic configurations, including  the  system  and  signal model as well as the channel estimation model. Under this background,  we will then review  the influence  of a common-sense  technique, i.e.,  pilot randomization,  on the  pilot-aware attack and identify  the existing key impediments.
\begin{figure}[!t]
\centering \vspace{-10pt}\includegraphics[width=0.65\linewidth]{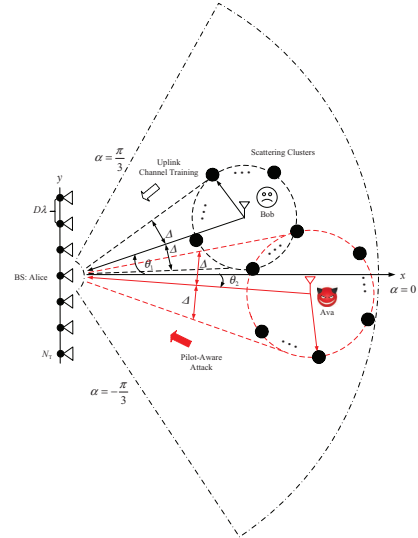}
\caption{Diagram of large-scale MISO-OFDM system under the  wide-band one-ring scattering  model. In this system,  AoA ranges of  Bob and Ava overlap, which incurs an effective pilot-aware attack on the  uplink channel estimation.}\label{fig:system_b}
\vspace{-10pt}
\label{System_model}
\end{figure}
\subsection{System Description}
We consider an  synchronous large-scale MISO-OFDM system with a $N_{\rm T}\gg 1$-antenna base station (named as Alice) and a single-antenna legitimate user (named as Bob). As shown in  Fig.~\ref{System_model},  the based station (BS) with angle spread  $\Delta $ is equipped with a $D\lambda$-spacing directive ULA and  placed at the origin along the $y$-axis to serve a 120-degree sector that is centered around the $x$-axis ($\alpha  = 0$). We assume  that no energy is received for angles $\alpha  \notin \left[ {-\frac{\pi }{3},\frac{{\pi }}{3}} \right]$.   Furthermore, we consider the wide-band one-ring scattering  model for which   Bob  is  surrounded by local scatterers  within  $\left[ {{\theta _1} - {\Delta },{\theta _1} + {\Delta }} \right]$~\cite{Adhikary,Han}. Here $\theta _1$ represents the mean AoA of clusters surrounding Bob.

 In this system,  pilot-tone based uplink channel training process is considered in which  $N$ available subcarriers indexed by $\Psi$ are provided during each available  OFDM symbol time. In principle,   $N_{\rm B}$ subcarriers  indexed by ${\Psi _{\rm{B}}}{\rm{ = }}\left\{ {{i_0},{i_1}, \ldots ,{i_{{N_{\rm B}} - 1}}} \right\}$ are employed  for  pilot tone insertion and  the following channel estimation.  A single-antenna malicious node (named as Ava)  then aims to disturb  this training process  by jamming/spoofing/nulling  those pilot tones. We denote  the set of victim  subcarriers  by ${\Psi _{\rm{A}}}{\rm{ = }}\left\{ {{i_0},{i_1}, \ldots ,{i_{{N_{\rm A}} - 1}}} \right\}$ where $N_{\rm A}$ denotes the number of victim  subcarriers.  Furthermore, we   make the following assumption:
\begin{assumption}
Ava is  surrounded by local scatterers  within $\left[ {{\theta _2} - {\Delta },{\theta _2} + {\Delta }} \right]$ and always has the overlapping AoA intervals with Bob, this is, $\left[ {{\theta _2} - \Delta ,{\theta _2} + \Delta } \right] \cap \left[ {{\theta _1} - \Delta ,{\theta _1} + \Delta } \right] \ne \emptyset $. Here, $\theta _2$ denotes the mean AoA of clusters surrounding Ava.
\end{assumption}
This assumption is supported by the scenario where a common large scattering body (e.g., a large building) could create a set of angles common to all nodes in the system and the overlapping is inevitable. The result is that   the  channel covariance  eigenspaces of two nodes are coupled and  the attack is hard to be eliminated through angular separation~\cite{Adhikary}.
\subsection{Receiving Signal Model}
To begin with, we denote  pilot tones of  Bob and Ava at the $j$-th subcarrier and $k$-th symbol time, respectively by ${x_{\rm{B}}^j\left[ k \right]}, {j \in {\Psi _{\rm{B}}}}$ and  ${x_{\rm{A}}^j\left[ k \right]}, {j \in {\Psi _{\rm{A}}}}$.
\begin{assumption}
 We in this paper assume $x_{{\rm{B}}}^i\left[ k \right] = {x_{{\rm{B}}}}\left[ k \right]= \sqrt {{\rho _{{\rm{B}}}}} {e^{j{\phi_{k} }}},{{i \in {\Psi _{\rm{B}}}}}$ for low overhead consideration and theoretical analysis. Alternatively,  we can superimpose ${x_{{\rm{B}}}}\left[ k \right] $ onto a dedicated  pilot sequence optimized under a non-security oriented scenario. At this point,  $\phi_{k}$ can be an additional phase difference for security consideration. We do not constraint the  strategies of pilot tones of Ava such that ${x^i_{\rm{A}}}\left[ k \right] = \sqrt {{\rho_{\rm{A}}}} {e^{j{{\varphi}_{k,i}} }},{{i \in {\Psi _{\rm{A}}}}}$.
\end{assumption}

Let us proceed to the basic OFDM procedure. First,  the frequency-domain pilot signals of Bob and Ava  over $N$ subcarriers  are  respectively  stacked as $N$ by $1$  vectors ${{\bf{x}}_{\rm{B}}}\left[ k \right] = \left[ {{x_{{\rm{B}},j}}\left[ k \right]} \right]_{{ {j \in {\Psi}}}}^{\rm{T}}$ and ${{\bf{x}}_{\rm{A}}}\left[ k \right] = \left[ {{x_{{\rm{A}},j}}\left[ k \right]} \right]_{{ {j \in {\Psi}}}}^{\rm{T}}$. Here  there exist:
  \begin{equation}\label{E.1}
{x_{{\rm{B}},j}}\left[ k \right] = \left\{ {\begin{array}{*{20}{c}}
{x_{\rm{B}}\left[ k \right]}&{{j \in {\Psi _{\rm{B}}}}}\\
0&{{j \notin{\Psi _{\rm{B}}}}}
\end{array}} \right., {x_{{\rm{A}},j}}\left[ k \right] = \left\{ {\begin{array}{*{20}{c}}
{x_{\rm{A}}^j\left[ k \right]}&{{j \in {\Psi _{\rm{A}}}}}\\
0&{{j \notin{\Psi _{\rm{A}}}}}
\end{array}} \right.
\end{equation}
Assume that the  length of cyclic prefix is larger than  $L$.  The parallel streams, i.e.,  ${{\bf{x}}_{{\rm{B}}}}\left[ k \right]$ and  ${{\bf{x}}_{{\rm{A}}}}\left[ k \right]$, are modulated with inverse fast Fourier transform (IFFT). After removing the cyclic prefix at the $i$-th receive antenna and $k$-th OFDM symbol time, Alice derive the time-domain $N$ by $1$  vector ${{\bf{y}}^i}\left[ k \right]$  as:
\begin{equation}\label{E.3}
{{\bf{y}}^i}\left[ k \right] = {\bf{H}}_{{\rm{C,B}}}^i{{\bf{F}}^{\rm{H}}}{{\bf{x}}_{\rm{B}}}\left[ k \right] + {\bf{H}}_{{\rm{C,A}}}^i{{\bf{F}}^{\rm{H}}}{{\bf{x}}_{\rm{A}}}\left[ k \right] + {{\bf{v}}^i}\left[ k \right]
\end{equation}
where ${\bf{H}}_{{\rm{C,B}}}^i$ and ${\bf{H}}_{{\rm{C,A}}}^i$ are $N \times N$ circulant matrices for which the  first column of  ${\bf{H}}_{{\rm{C,B}}}^i$  and ${\bf{H}}_{{\rm{C,A}}}^i$ are respectively given by ${\left[ {\begin{array}{*{20}{c}}
{{\bf{h}}_{\rm{B}}^{{i^{\rm{T}}}}}&{{{\bf{0}}_{1 \times \left( {N - L} \right)}}}
\end{array}} \right]^{\rm{T}}}$ and ${\left[ {\begin{array}{*{20}{c}}
{{\bf{h}}_{\rm{A}}^{{i^{\rm{T}}}}}&{{{\bf{0}}_{1 \times \left( {N - L} \right)}}}
\end{array}} \right]^{\rm{T}}}$. Here, ${\bf{h}}_{\rm{B}}^i\in {{\mathbb C}^{L \times 1}} $ and ${\bf{h}}_{\rm{A}}^i \in {{\mathbb C}^{L\times 1}}$  are CIR  vectors, respectively from Bob and Ava to the $i$-th receive antenna of Alice. ${\bf{h}}_{\rm{A}}^i $ is   assumed to be independent with  ${\bf{h}}_{\rm{B}}^i$. ${{\bf{v}}^i}\left[ k \right]\in {{\mathbb C}^{N \times 1}}$ with ${{\bf{v}}^i}\left[ k \right] \sim {\cal C}{\cal N}\left( {0,{{{\bf{I}}_N}\sigma ^2}} \right)$  is the  AWGN  vector  at the $i$-th antenna and $k$-th symbol time
%For the $l$-th path of $i$-th antenna,  Bob has the  channel power delay profile (PDP) denoted by $\sigma _{{\rm{B}},l,i}^2$ and  Ava owns the PDP denoted by $\sigma _{{\rm{A}},l,i}^2$. ${\bf{F}}$ denotes the $N\times N$ unitary DFT matrix and ${{\bf{v}}^i_{N}}\left[ k \right]\sim{\cal C}{\cal N}\left( {0,{\sigma ^2}{{\bf{I}}_{N \times N}}} \right)$ denotes the complex  additive white Gaussian noise (AWGN) vector where   $\sigma ^2$ represents the average noise power.
Taking  FFT,  Alice  finally derives the    frequency-domain $N$ by $1$ signal vector  at the $i$-th receive antenna and $k$-th OFDM symbol time as
\begin{equation}\label{E.4}
{\widetilde {\bf{y}}^i}\left[ k \right] = {\rm{diag}}\left\{ {{{\bf{x}}_{\rm{B}}}\left[ k \right]} \right\}{{\bf{F}}_{\rm{L}}}{\bf{h}}_{\rm{B}}^{{i}}  + {\rm{diag}}\left\{ {{{\bf{x}}_{\rm{A}}}\left[ k \right]} \right\}{{\bf{F}}_{\rm{L}}}{\bf{h}}_{\rm{A}}^{{i}} + {{\bf{w}}^i_{N}}\left[ k \right]
\end{equation}
Here, there exists ${{\bf{F}}_{\rm{L}}} = \sqrt N {\bf{F}}\left( {:,1:L} \right)$ where $\bf {F}$ denotes the DFT matrix.  And we have  ${{\bf{w}}^i_{j}}\left[ k \right]={{\bf{F}}_{j}}{{\bf{v}}^i}\left[ k \right]$ where ${{\bf{F}}_{j}}$ is the $j$-row submatrix of $\bf {F}$.
Throughout this paper, we assume that the CIRs belonging to  different paths  at each antenna  exhibit spatially uncorrelated Rayleigh fading. We denote power delay profiles (PDPs)  of  the $l$-th path of  Bob and Ava, respectively by $\sigma _{{\rm{B}},l}^2$, $\sigma _{{\rm{A}},l}^2$.
Without loss of generality, each path has the uniform and normalized PDP satisfying  $\sum\limits_{l = 1}^L {\sigma _{{\rm B},l}^2}  = 1, \sum\limits_{l = 1}^L {\sigma _{{\rm A},l}^2}  = 1$~\cite{McKay}.  For each path, CIRs  of different antennas  are assumed to be spatially correlated.
 In  one-ring scattering scenarios, the correlation  between the channel coefficients of antennas $1 \le m,n \le N_{\rm T}$, $\forall l$ can be  defined by~\cite{Adhikary}:
    \begin{equation}\label{E.5}
{\left[ {{{\bf{R}}_{{k}}}} \right]_{m,n}} = \frac{1}{{2\Delta }}\int_{ - {\Delta} + {\theta_{k}}}^{{\Delta} + {\theta_{k}}} {{e^{ - j2\pi D\left( {m - n} \right)\sin \left( \theta  \right)}}} d\theta, k=1,2
  \end{equation}
Here, ${{\bf{R}}_{{i}}}$ represents the channel covariance matrix of Bob if $i=1$  and  Ava otherwise.   ${{\bf{R}}_{{1}}}$, instead of ${{\bf{R}}_{{2}}}$,   is known by Alice.
\subsection{ Channel Estimation Model}
Now let us turn to describe the estimation models  of  FS channels under specific  attacks.  First,   Ava under PTS attack mode could learn  the pilot tones employed by Bob  in advance and  impersonate Bob  by utilizing the same pilot tone learned.  In this case, there exists $
{\Psi _{\rm{B}}}\cup {\Psi _{\rm{A}}} = {\Psi _{\rm{B}}}$ and $x_{{\rm{A}}}^i\left[ k \right] = {x_{{\rm{B}}}}\left[ k \right],{i \in {\Psi _{\rm{B}}}}$. Signals in  Eq.~(\ref{E.4}) can be rewritten as:
\begin{equation}\label{E.6}
{\widetilde {\bf{y}}^i_{\rm PTS}}\left[ k \right] = {{\bf{F}}_{\rm{L}}}{\bf{h}}_{\rm{B}}^{{i}}{x_{\rm{B}}}\left[ k \right]  + {{\bf{F}}_{\rm{L}}}{\bf{h}}_{\rm{A}}^{{i}}{x_{\rm{B}}}\left[ k \right] + {{\bf{w}}^i_{N}}\left[ k \right]
\end{equation}

Finally, a least square (LS)  based channel estimation is formulated by:
${\widehat {{\bf{h}}}_{con}^i} = {\bf{h}}_{\rm{B}}^i + {\bf{h}}_{\rm{A}}^i + {\left( {{{\bf{F}}_{\rm{L}}}} \right)^ + }\frac{{x_{\rm{B}}^{\rm{H}}\left[ k \right]}}{{{{\left| {x_{\rm{B}}^{\rm{H}}\left[ k \right]} \right|}^2}}}{{\bf{w}}^i_{N}}\left[ k \right]$
where $\left( {{{\bf{F}}_{\rm{L}}}} \right)^+ $ is  the MoorePenrose
pseudo-inverse of $ {{{\bf{F}}_{\rm{L}}}} $.  We  see that   the estimation of ${\bf{h}}_{\rm{B}}^i $  is contaminated by  ${\bf{h}}_{\rm{A}}^i $   with  a noise bias when a PTS  attack happens.

As to PTN  attack,  we emphasize the difference lying in the fact that  there exists  ${\rm{diag}}\left\{ {{{\bf{x}}_{\rm{A}}}\left[ k \right]} \right\}{\rm{ = }}{\bf{\Sigma }} \odot {x_{\rm{A}}}\left[ k \right]$ such that ${\bf{\Sigma }}{{\bf{F}}_{\rm{L}}}{\bf{h}}_{\rm{A}}^i{x_{\rm{A}}}\left[ k \right] =  - {{\bf{F}}_{\rm{L}}}{\bf{h}}_{\rm{B}}^i{x_{\rm{B}}}\left[ k \right]$. Obviously, Ava can derive a unique solution of the diagonal matrix  ${\bf{\Sigma}}$ because the assumed  Ava can get both ${\bf{h}}_{\rm{B}}^i $ and ${\bf{h}}_{\rm{A}}^i $ (a very strong assumption in~\cite{Clancy2}).  In this case, the received signals can be rewritten as ${\widetilde {\bf{y}}^i_{\rm PTN}}\left[ k \right] =  {{\bf{w}}^i_{N}}\left[ k \right]$.
We see that the received signals are completely random noises, which can be seen as the worst destruction.

In order to represent the case where PTJ attack happens,   we configure  the matrix ${\bf{\Sigma}}$ with  random input values.   The estimated channels are with the similar form as those in PTS attack. The difference is that unlike PTS attack, the estimated channels  cannot  benefit both Bob and Ava, which is least efficient.
\subsection{ Influence of Pilot Randomization on   Pilot-Aware Attack}
To defend  against  pilot-aware attack, the commonsense is that  Bob shall  randomize its own  pilot tones.  In practice,  the randomization of pilot tone values is employed.  More specifically, each of the candidate pilot phases   is  mapped into a unique quantized  samples, chosen  from the set ${\cal A}$, defined by ${\cal C} = \left\{ {{\phi}:{{{\phi} = 2m\pi } \mathord{\left/
 {\vphantom {{{\phi _k} = 2m\pi } C}} \right.
 \kern-\nulldelimiterspace} C},0 \le m \le C - 1} \right\}$ where $C$ reflects  the quantization resolution.  This  type of pilot randomization, due to the constraint  of discrete phase samples,  practically could not prevent  a hybrid  attack from happening but serves as a prerequisite for defending against pilot aware attack.

 In what follows, we make the following assumption for Bob for the sake of  theoretical analysis:
 \begin{assumption}
 During  two adjacent OFDM symbol time, such as, $k_i,k_{i+1}, i\ge0$,   two pilot phases ${\phi _{{k_i}}}$ and ${\phi _{{k_{i+1}}}}$  are kept with fixed phase difference, that is,   ${\phi _{{k_{i+1}}}} - {\phi _{{k_{i}}}} = \overline \phi $. Here, ${\phi _{{k_{i+1}}}}$ and ${\phi _{{k_{i}}}}$ are both random but $\overline \phi$  are deterministic and  publicly known.
 \end{assumption}
Institutively,  how the value $C$ increases affects  the  performance of   anti-attack technique. However, things seem not to be simple as we think.  As discussed in the Introduction part, a fact is that randomized pilots, if utilized for uplink channel training   through wireless channels, cannot be separated, let alone identified.
\begin{figure}[!t]
\centering \vspace{-10pt}\includegraphics[width=1.0\linewidth]{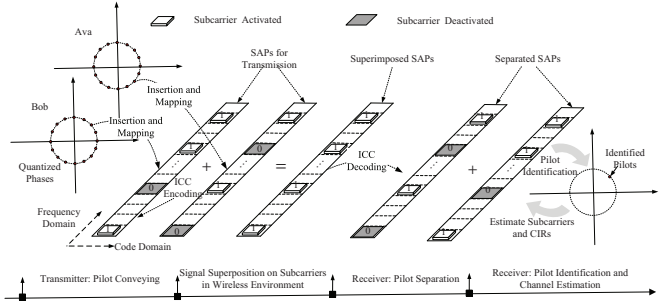}
\caption{Diagram of ICCB uplink channel  training procedures.}\label{fig:system_b}
\label{ICCB}
\end{figure}

 \section{ICCB Uplink Training Architecture}
 \label{PUTA}
In view of above issues, we in  this section  aim to construct a novel  pilot sharing mechanism, logically including three key procedures, i.e., pilot conveying, pilot separation and pilot identification. Each procedure can be found in Algorithm 1 and Fig.~\ref{ICCB}.
\subsection{Pilot Conveying via Binary Code on Code-Frequency Domain}
\label{PCBC}
\subsubsection{Binary Code}
The  Eq. (40) in~\cite{Shakir}  provides  a  decision threshold  function $\gamma  \buildrel \Delta \over =f\left( {{N_{\rm{T}}},{P_f}} \right)$, for measuring how many  antennas on one subcarrier  are required to achieve a certain probability ${P_f}$ of false alarm. Here  we consider three symbol time and  a $3 \times N_{\rm T}$ receiving signal matrix is created for detection.   Under this requirement, we try to build up a relationship between SAPs with the common binary code.  Before that, we have the following definition:
 \begin{definition}
 One subcarrier  can be precisely encoded if, for any $\varepsilon > 0$, there exists a positive number $\gamma \left( \varepsilon  \right)$  such that, for all  $\gamma  \ge \gamma \left( \varepsilon  \right)$, ${P_f}$  is smaller than $\varepsilon $.
\end{definition}
We should note that $f\left( {{N_{\rm{T}}},{P_f}} \right)$ is a monotone decreasing  function of two independent variables ${N_{\rm{T}}}$ and ${P_f}$. For a given probability constraint ${\varepsilon}^*$, we could always expect a lower bound $\gamma \left( {{\varepsilon ^*}} \right)$ of possible thresholds  such that $\gamma \left( {{\varepsilon ^*}} \right) = f\left( {{N_{\rm{T}}},{\varepsilon ^*}} \right)$ is satisfied. Under this equation, we could flexibly configure  ${N_{\rm{T}}}$ and $\gamma \left( {{\varepsilon ^*}} \right)$ to make ${\varepsilon}^*$  approach zero~\cite{Shakir}.   We also find that  the $\gamma$  that achieves zero-${P_f}$  is decreased with the increase of  antennas. Basically, this phenomenon originates from the fact that  the increased dimension  makes   the eigenvalues of noise matrix  to be more concentrated in a narrow interval, which is determined by the well-known Marcenko-Pastur Law~\cite{Hoydis}.
\subsubsection{Code Frequency Domain}
Based on  Definition 1, we can encode  the $ m$-th subcarrier as  a binary  digit   ${s_{ m}} $   according to:
 \begin{equation}\label{E.18}
{s_{m}} = \left\{ {\begin{array}{*{20}{c}}
1&\rm {if \,\,\,there \,\,\, exist  \,\,\, signals}\\
0&{otherwise}
\end{array}} \right.
  \end{equation}
Meanwhile, let us denote a set of binary code vectors  by ${{\cal S}}$  with ${{\cal S}}= \left\{ {\left. {\bf{s}} \right|{s_{m}} \in \left\{ {0,1} \right\},1 \le m \le {{L_s}}} \right\}$ where ${L_s}$ denotes the maximum length of  the  code.  Then, a code frequency domain could be constructed as a set of pairs $\left( {{{\bf{s}}},b} \right)$ with ${\bf{s}} \subset {{\cal S}}$ and $1\le b \le N_{\rm B}$ where $b$ is an integer  representing the subcarrier  index of appearance of the code. This can be depicted in Fig.~\ref{ICCB}.
\subsubsection{Binary Codebook Matrix}
On the formulated  code-frequency domain, we group  the binary digits and  construct  the binary code  by presenting  a binary codebook as follows:
\begin{definition}\label{E.19}
Given a $N_{\rm B} \times C$ binary matrix ${\bf C}$ with each element satisfying  ${c_{i,j}} \in {\bf{s}}\subset {{\cal S}}$, we denote the $i$-th column of ${\bf C}$  by ${\bf c}_{i}$ with ${{\bf{c}}_i} = {\left[ {\begin{array}{*{20}{c}}
{{c_{1,i}}}& \cdots &{{c_{N_{\rm B},i}}}
\end{array}} \right]^{\rm{T}}}$.   We call  $\bf C$  a binary codebook matrix  and ${{\bf{c}}_i}$ a codeword  of $\bf C$ of  length $N_{\rm B}$.
\end{definition}
Based on this definition, we also define  a superposition principle between codewords by the following:
\begin{definition}
The superposition sum ${\bf{z}} = {\bf{x}}{{V}}{\bf{y}}$ (designated as the digit-by-digit Boolean sum) of two codewords denoted by ${\bf{x}} = \left( {{x_1},{x_2}, \ldots ,{x_{N_{\rm B}}}} \right)\subset  {\bf C}$  and ${\bf{y}} = \left( {{y_1},{y_2}, \ldots ,{y_{N_{\rm B}}}} \right)\subset  {\bf C}$ is defined by:
  \begin{equation}\label{E.20}
{z_i} = \left\{ {\begin{array}{*{20}{c}}
0&{if\,\,{x_i} = {y_i} = 0}\\
1&{otherwise}
\end{array}} \right.
  \end{equation}
  where ${z_i}$ denotes the $i$-th element of vector ${\bf{z}}$.
\end{definition}
Based on above preparations, the pilot conveying process can be shown in Algorithm 1.
\begin{algorithm}[!t]\label{CEAIE}
\caption{Pilot Conveying, Separation and Identification}
\begin{algorithmic}[1]
\STATE  \textbf{Pilot  Conveying} 1) Insert one  phase  that is  selected from set $\cal A$, onto  subcarriers at the initial OFDM symbol, for instance defined by $k_{0}$. The  phases  of pilot signals inserted in adjacent OFDM symbols, such as $k_{i},i\ge1$ obey the Assumption 3.

2) Construct an one-to-one   mapping from  the  phases in set $\cal A$ to codewords of binary codebook matrix derived in Section~\ref{PCBC}, and then further  to SAPs. Select one phase, i.e., the phase at $k_{0}$,  for  pattern activation. The specific principle is that pilot signals are transmitted on the $i$-th subcarrier if  the $i$-th digit of the codeword  is equal to 1, otherwise  this subcarrier is kept  unoccupied.
\STATE  \textbf{Pilot  Separation} Alice  detects  the available subcarriers to acquire the superimposed SAPs using the detection technique shown in~\cite{Shakir}. Then Alice decodes those superimposed SAPs and derives two individual codewords  by using the inner-product based differential decoding proposed in~\cite{Xu}.
\STATE  \textbf{Pilot  Identification} Separated codewords that satisfy Theorem 1 are qualified to be  identified and then demapped into the  pilot phases in $\cal A$  for  recovering the pilot signals of Bob.
\end{algorithmic}
\end{algorithm}
\subsection{Pilot Separation and Identification  Via ICC}
\label{PSI}
 The study of how to optimize the previous binary  codebook  such that it can  separate and identify   codewords from the disturbed codeword is called  ICC theory, including the encoding principle and decoding principle.
\subsubsection{Encoding Principle}
We  introduce the concept of $s$-overlapping code with  constant weight $w$  by defining:
\begin{definition}
A $N_{\rm B} \times C$ binary matrix $\bf C$  is called a ICC-$\left( {N_{\rm B},s} \right)$ code of length $N_{\rm B}$ and order $s$,  if for any column set $\cal Q$ such that  $ \left| \cal Q \right|=2$,  there exist at least a set $\cal S$ of $s$ rows  such that ${c_{i,j}} = 1,\forall i,j, {i \in \cal S}, j \in  {\cal Q}$.
\end{definition}
In this principle, we  can know that any two codewords in $\bf C$ must overlap with each other on  at least  $s$ non-zero digits.   Backing to the subcarriers, $s$ means the overlapped subcarriers which are exploited for channel estimation.  Now, we try to establish the relationship of  $s$ with the weight $w$ of the code since the number of $w$ determines the code rate.
\begin{theorem}
The weight of ICC-$\left( {N_{\rm B},s} \right)$ code of length $N_{\rm B}$ and order $s$ satisfies $w = \frac{{N_{\rm B} + s}}{2}$ with $N_{\rm B}\ge s$. $w$ is an integer smaller than $N_{\rm B}$.
\end{theorem}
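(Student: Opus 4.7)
The plan is to translate Definition~4 into set-theoretic language and then exploit the standard inclusion--exclusion identity. I identify each codeword $\mathbf{c}_i$ of constant weight $w$ with its support $A_i\subseteq\{1,\ldots,N_B\}$, so $|A_i|=w$, and the ICC-$(N_B,s)$ property becomes $|A_i\cap A_j|\geq s$ for every pair. The key identity I would use throughout is $|A|+|B|=|A\cup B|+|A\cap B|$, which under constant weight collapses to $2w=|A\cup B|+|A\cap B|$. This one equation carries the whole argument.

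Next, I would pick an extremal pair of codewords, namely the pair that actually realizes the order $s$. For this pair the intersection saturates, $|A\cap B|=s$: if every pair had intersection strictly greater than $s$, the code would in fact be of order at least $s+1$, contradicting the parameter. The second ingredient is to argue that the same extremal pair also satisfies $|A\cup B|=N_B$: any coordinate left outside $A\cup B$ would be a subcarrier activated by neither codeword of the pair, which contradicts the code-rate maximization criterion stated in the paragraph preceding the theorem, because one could adjoin such a coordinate to the supports without breaking the $\geq s$ overlap constraint on any other pair and thereby enlarge the admissible codeword family. Substituting these two saturations into the inclusion--exclusion identity gives $2w=N_B+s$, i.e., $w=(N_B+s)/2$.

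The remaining claims are arithmetic. Integrality of $w$ forces $N_B$ and $s$ to share parity, which I would treat as the natural compatibility condition on admissible parameters. The strict bound $w<N_B$ is equivalent to $s<N_B$; combined with the hypothesis $N_B\geq s$ and the exclusion of the degenerate all-ones codeword (which trivializes pilot separation, since its superposition with any codeword is itself), this yields $w<N_B$. The consistency $w\geq s$ demanded by $|A\cap B|\leq\min(|A|,|B|)=w$ is likewise automatic from $N_B\geq s$.

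The main obstacle is the step asserting $|A\cup B|=N_B$ on the extremal pair. The set-theoretic accounting is trivial once this is in place, but rigorously justifying full coverage requires a careful rate-maximization argument: for any ICC code that leaves some coordinate uncovered by an extremal pair, one must exhibit a strictly larger ICC code of the same order $s$, thereby certifying sub-optimality and forcing the extremal configuration used in the derivation. The rest of the proof is bookkeeping around the identity $2w=|A\cup B|+|A\cap B|$.
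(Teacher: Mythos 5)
Your proposal is correct and follows essentially the same route as the paper: the identity $2w=|A\cup B|+|A\cap B|$ with the extremal pair saturating $|A\cap B|=s$ and $|A\cup B|=N_{\rm B}$ is exactly the paper's observation that the minimum overlap $2w-N_{\rm B}$ is attained when the zero digits of each codeword are fully occupied and must equal $s$. You are in fact more explicit than the paper about the one step that needs justification (full coverage by the extremal pair, which the paper asserts without a rate-maximization argument), so no substantive difference remains.
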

\begin{proof}
See proof in Appendix~\ref{Theorem1}.
\end{proof}
Here and in the following section, we assume the ratio of two integers is always  kept to be an integer without loss of generality.  Based on the theorem, we can derive the number of  codewords or namely the columns  in $\bf C$,  by a binomial coefficient $C = \left( {\begin{array}{*{20}{c}}
{{N_{\rm B}}}\\ {\frac{{{N_{\rm B}} + s}}{2}}
\end{array}} \right)$.   Therefore, we have the following proposition:
\begin{proposition}
 The code rate of ICC-$\left( {N_{\rm B},s} \right)$ code,  defined by ${R_{ICC}} = \frac{{{{\log }_2}\left( C \right)}}{{{N_{\rm B}}}}$, is calculated as
\begin{equation}\label{E.21}
{R_{ICC}} = {\log _2}{\left[ {\frac{{{N_{\rm B}}!}}{{\left( {\frac{{{N_{\rm B}} + s}}{2}} \right)!\left( {\frac{{{N_{\rm B}} -s}}{2}} \right)!}}} \right]^{{1 \mathord{\left/
 {\vphantom {1 {{N_{\rm B}}}}} \right.
 \kern-\nulldelimiterspace} {{N_{\rm B}}}}}}
\end{equation}
\end{proposition}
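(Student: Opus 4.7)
The plan is to reduce the proposition to a direct counting argument that exploits Theorem~1. Since Theorem~1 fixes the Hamming weight of every codeword to $w = (N_{\rm B}+s)/2$, each codeword of $\bf C$ is a binary vector of length $N_{\rm B}$ with exactly $w$ ones. The whole proposition therefore boils down to showing that the number $C$ of columns admissible in $\bf C$ is precisely the binomial coefficient $\binom{N_{\rm B}}{w}$, and then substituting into the definition of $R_{ICC}$.

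First I would verify that \emph{every} length-$N_{\rm B}$ binary vector of weight $w$ can coexist in the same codebook without violating Definition~4. Pick any two such vectors with supports $A,B\subseteq\{1,\dots,N_{\rm B}\}$. By inclusion--exclusion, $|A\cap B| = |A|+|B|-|A\cup B| \ge 2w - N_{\rm B} = s$, so the pairwise overlap bound of $s$ non-zero digits is automatic. Hence the whole constant-weight code of weight $w$ is a legitimate ICC-$(N_{\rm B},s)$ code, and no further exclusions need be made.

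Second, I would count: there are exactly $\binom{N_{\rm B}}{w}$ binary vectors of length $N_{\rm B}$ and weight $w = (N_{\rm B}+s)/2$, and the integrality conditions $N_{\rm B}\ge s$ and $w\le N_{\rm B}$ from Theorem~1 make this coefficient well-defined. Thus $C = \binom{N_{\rm B}}{(N_{\rm B}+s)/2} = \frac{N_{\rm B}!}{((N_{\rm B}+s)/2)!\,((N_{\rm B}-s)/2)!}$. Plugging this into $R_{ICC} = (\log_2 C)/N_{\rm B} = \log_2 C^{1/N_{\rm B}}$ yields the claimed closed-form expression.

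The only real subtlety is the first step: one might worry whether the admissibility condition could be \emph{more} restrictive than the single pairwise-overlap inequality, or whether the binomial count could overcount due to permutations of identical codewords. Both concerns dissolve quickly---the inclusion--exclusion lower bound is already tight for arbitrary weight-$w$ supports, and distinct supports give distinct codewords---so I expect no real obstacle beyond stating the inclusion--exclusion inequality clearly.
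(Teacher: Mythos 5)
Your proposal is correct and follows essentially the same route as the paper: the paper also derives $C=\binom{N_{\rm B}}{(N_{\rm B}+s)/2}$ directly from the constant weight $w=(N_{\rm B}+s)/2$ fixed by Theorem~1 and substitutes into the definition of $R_{ICC}$. Your explicit inclusion--exclusion check that any two weight-$w$ vectors overlap in at least $2w-N_{\rm B}=s$ positions merely makes rigorous a step the paper leaves implicit (it is the same bound underlying the proof of Theorem~1), so there is no substantive difference.
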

\begin{theorem}
 The optimal ICC-$\left( {N_{\rm B},s} \right)$ code  maximizing the code rate holds when $s=L$. In this case, the  reliability measured by IEP is given by
\begin{equation}\label{E.25}
{{P}_{\rm{I}}} = \frac{{{N_{\rm B}}! - \left( {\frac{{{N_{\rm B}} + L}}{2}} \right)!\left( {\frac{{{N_{\rm B}} - L}}{2}} \right)!}}{{{2^{{N_{\rm B}} + 1}}\left( {\frac{{{N_{\rm B}} + L}}{2}} \right)!\left( {\frac{{{N_{\rm B}} - L}}{2}} \right)!}}
\end{equation}
\end{theorem}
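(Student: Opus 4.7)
The plan is to decompose the theorem into two coupled sub-claims: (i) the maximizer of $R_{ICC}$ over the admissible range of $s$ is $s=L$, and (ii) at this optimum the IEP equals the stated closed form. I would address them in this order, because the IEP formula depends on the optimal weight $w^{*}=(N_{\rm B}+L)/2$ and codebook cardinality $C^{*}=\binom{N_{\rm B}}{w^{*}}$ delivered by the first step.

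For sub-claim (i), I would start from Proposition 1 and note that $R_{ICC}(s)$ is a strictly increasing function of the binomial coefficient $\binom{N_{\rm B}}{(N_{\rm B}+s)/2}$. By the unimodality of $\binom{N_{\rm B}}{k}$ about $k=N_{\rm B}/2$, this coefficient is strictly decreasing in $s$ over $[0,N_{\rm B}]$, since enlarging $s$ displaces the weight $w=(N_{\rm B}+s)/2$ rightward, away from the mode. Hence $R_{ICC}$ is maximized by taking $s$ as small as the estimation constraint allows. The binding constraint arises from well-imposed CIR estimation on the overlapped subcarriers: the $s$ active rows form a submatrix of $F_{L}$, and the $L$-dimensional CIR vector admits a full-rank least-squares solution only when $s\ge L$. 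Combining monotonicity with feasibility pins the minimum feasible choice $s^{*}=L$ and establishes (i).

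For sub-claim (ii), I would compute the identification error probability by combinatorial counting over the ensemble of superposition outcomes produced at the optimum $s=L$. With $w^{*}=(N_{\rm B}+L)/2$, every codeword sits on the weight-$w^{*}$ layer of $\{0,1\}^{N_{\rm B}}$, and there are exactly $C^{*}=\binom{N_{\rm B}}{w^{*}}=\frac{N_{\rm B}!}{((N_{\rm B}+L)/2)!\,((N_{\rm B}-L)/2)!}$ of them. Given Bob's codeword $\mathbf{c}_{i}$ and Ava's arbitrary SAP $\mathbf{y}$, the superposition $\mathbf{z}=\mathbf{c}_{i}\,V\,\mathbf{y}$ lies in $\{0,1\}^{N_{\rm B}}$, and identification fails precisely when some $\mathbf{c}_{j}\in\mathbf{C}$ with $\mathbf{c}_{j}\neq\mathbf{c}_{i}$ is also compatible with $\mathbf{z}$ under the ICC constraints. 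The accounting then splits naturally into two pieces: enumerate the size of the admissible $(\mathbf{c}_{i},\mathbf{z})$ sample space, which yields the $2^{N_{\rm B}+1}$ normalization after applying the Bob/Ava swap symmetry of the Boolean sum; then subtract the cardinality of the uniquely-decodable subset, which again reduces to the weight-$w^{*}$ enumeration used in Theorem 1. Dividing the difference by the normalization delivers the claimed closed form $P_{\rm I}=(C^{*}-1)/2^{N_{\rm B}+1}$, which is exactly the theorem statement after clearing denominators.

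I expect the main obstacle to be the combinatorial accounting in (ii), specifically choosing equivalence classes on superposition patterns so that both the $2^{N_{\rm B}+1}$ normalization and the subtracted term $((N_{\rm B}+L)/2)!\,((N_{\rm B}-L)/2)!$ emerge from one consistent argument rather than two ad hoc ones. Particular care is needed around the Bob/Ava symmetry of $\mathbf{x}\,V\,\mathbf{y}$, which is the most plausible origin of the extra factor of $2$ in the denominator, and around the edge cases in which Ava's SAP happens to coincide with, or be contained in, a legitimate codeword, since there the ICC $L$-overlap property must be invoked to rule out spurious ambiguities that would otherwise inflate the error count.
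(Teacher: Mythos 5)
Your sub-claim (i) is exactly the paper's argument: well-posed CIR estimation forces $s\ge L$ so that ${\bf F}_{{\rm L},s}$ has full column rank $L$, and the codebook size $C=\binom{N_{\rm B}}{(N_{\rm B}+s)/2}$ decreases in $s$, so the minimum feasible $s^{*}=L$ maximizes $R_{ICC}$. Your appeal to unimodality of the binomial coefficients merely makes explicit what the paper asserts without proof, and is welcome.

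Sub-claim (ii) is where your accounting would fail if executed literally. The paper's sample space is not the set of pairs $\left({\bf c}_{i},{\bf z}\right)$: it is the set of $2^{N_{\rm B}}$ binary SAPs that Ava may transmit under the hybrid attack, taken equiprobable. A confusion occurs only when Ava's SAP is itself a constant-weight-$w^{*}$ word distinct from Bob's; there are $C^{*}-1$ such patterns, the ``$-1$'' being the single case where Ava's word coincides with Bob's (then the decoded word is unique and no error arises). In each confusion event Alice extracts two equally plausible codewords and guesses, erring with probability $1/2$, whence $P_{\rm I}=\frac{1}{2}\cdot\frac{C^{*}-1}{2^{N_{\rm B}}}$. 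Your derivation of the normalization $2^{N_{\rm B}+1}$ by enumerating ``admissible $\left({\bf c}_{i},{\bf z}\right)$ pairs'' cannot produce this: for fixed ${\bf c}_{i}$ the achievable Boolean sums ${\bf z}$ are precisely the supersets of the support of ${\bf c}_{i}$, of which there are only $2^{N_{\rm B}-w^{*}}$. Likewise, you assign combinatorial meaning to the subtracted term $\left(\frac{N_{\rm B}+L}{2}\right)!\left(\frac{N_{\rm B}-L}{2}\right)!$ as the ``cardinality of the uniquely-decodable subset''; it is nothing but the denominator of $\binom{N_{\rm B}}{w^{*}}$ appearing after $\left(C^{*}-1\right)/2^{N_{\rm B}+1}$ is put over a common denominator. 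The factor of $2$ you attribute to Bob/Ava swap symmetry is, in substance, the $1/2$ wrong-guess probability between the two extracted codewords --- the same phenomenon, but it scales the error count rather than enlarging the sample space, and your argument needs to say so to close.
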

\begin{proof}
See proof in Appendix~\ref{Theorem2}.
\end{proof}

\subsubsection{Decoding Procedure}
The related  technique in this part is same with that in Fig. 3 of~\cite{Xu}.  We do not specify this.

The overall process can be shown in Algorithm 1.

\section{Channel Estimation and Identification Enhancement}
\label{CEIE}
\label{Estimation}
 In this section,  we continue  our design work  and focus on the channel estimation phase.
 %New questions emerged will be answered further
%\begin{question}
%How to estimate FS channels based on the identified pilots?
% Is it possible to  improve the reliability  performance of ICC theory by further digging  channel properties ?
%\end{question}
%We begin our discussion by talking about the FS channel estimation to answer Question 1.
\subsection{FS Channel Estimation}
We do not consider the case where there is no attack since in this case LS estimator is a natural choice.  If  looking back to the pilot identification under a certain attack, we could derive two results, that is, one identified Bob's pilot vector or two confusing pilot vectors. For better considering the two cases,  we in this section assume the identification error happens and forget the case without error, that is,  we could get  two confusing pilot vectors  defined by  ${\bf{x}}_{{\rm{L,1}}}{\rm{ = }}{\left[ {\begin{array}{*{20}{c}}
{{x_{\rm{B}}}\left[ {{k_0}} \right]}&{{x_{\rm{B}}}\left[ {{k_1}} \right]}
\end{array}} \right]^{\rm{T}}}$ and ${\bf{x}}_{{\rm{L,2}}}{\rm{ = }}{\left[ {\begin{array}{*{20}{c}}
{{x_{\rm{A}}}\left[ {{k_0}} \right]}&{{x_{\rm{A}}}\left[ {{k_1}} \right]}
\end{array}} \right]^{\rm{T}}}$ within two OFDM symbol time, i.e., ${k_0}$ and ${k_1}$. In this way,  the  estimator to be designed in this case can also apply in the another case naturally.

We consider two OFDM symbol time, i.e., ${k_0}$ and ${k_1}$ and $s,s\ge1 $ randomly-overlapping subcarriers.  The randomness means their random positions of carrier frequency.  Then the signals received on overlapping  subcarriers within ${k_0}$ and ${k_1}$  are stacked  as the ${2 \times  {N_{\rm T}s}}$ matrix ${{\bf{Y}}_{\rm L}}$, equal to
\begin{equation}\label{E.26}
{{\bf{Y}}_{\rm L}} = {{\bf{X}}_{{\rm L}}}{{\bf{H}}_{{{{\rm L}}}}} + {{\bf{N}}_{\rm L}}
 \end{equation}
 where  ${{\bf{X}}_{{\rm L}}}$  is denoted by a ${2 \times 2}$ matrix satisfying  ${{\bf{X}}_{\rm L}} = \left[ {\begin{array}{*{20}{c}}
{\bf{x}}_{{\rm{L,1}}}&{\bf{x}}_{{\rm{L,2}}}
\end{array}} \right]$. The integrated ${2 \times  {N_{\rm T}s}}$  channel matrix ${{\bf{H}}_{{{{\rm L}}}}}$  satisfies    ${{\bf{H}}_{{{{\rm L}}}}} = {\left[ {\begin{array}{*{20}{c}}
{{\bf{h}} _{{\rm{B}},{\rm L}}^{\rm{T}}}&{{\bf{h}} _{{\rm{A}},{\rm L}}^{\rm{T}}}
\end{array}} \right]^{\rm{T}}}$. Here, there exist ${{\bf{h}}_{{\rm{B}},{\rm L}}} = \left[ {\begin{array}{*{20}{c}}
{{{\left( {{{\bf{F}}_{{\rm{L}}, s}}{\bf{h}}_{\rm{B}}^i} \right)}^{\rm{T}}}}&{, \ldots ,}&{{{\left( {{{\bf{F}}_{{\rm{L}}, s}}{\bf{h}}_{\rm{B}}^{{N_{\rm{T}}}}} \right)}^{\rm{T}}}}
\end{array}} \right]$ and ${{\bf{h}}_{{\rm{A}},{\rm L}}} = \left[ {\begin{array}{*{20}{c}}
{{{\left( {{{\bf{F}}_{{\rm{L}}, s}}{\bf{h}}_{\rm{A}}^i} \right)}^{\rm{T}}}}&{, \ldots ,}&{{{\left( {{{\bf{F}}_{{\rm{L}}, s}}{\bf{h}}_{\rm{A}}^{{N_{\rm{T}}}}} \right)}^{\rm{T}}}}
\end{array}} \right]$. ${{\bf{F}}_{{\rm{L}}, s}}$ is  the $s$-row matrix for which each index of $s$ rows belongs to the set ${\cal P}_{s}$ that is  defined by   ${\cal P}_{s}=\left\{ {j_1, \ldots ,j_s} \right\}$, $ {{\cal P}_{s}} \subseteq \Psi, \left| {\cal P}_{s} \right| = s$.
${\bf{N}}_{\rm L}$ represents the ${2 \times  {N_{\rm T}s}}$ noise matrix  with  ${{\bf{N}}_{\rm L}} = {\left[ {\begin{array}{*{20}{c}}
{{\bf{w}}_{\rm L}^{\rm{T}}\left[ {{k_0}} \right]}&{{\bf{w}}_{\rm L}^{\rm{T}}\left[ {{k_1}} \right]}
\end{array}} \right]^{\rm{T}}}$ where there exists  ${{\bf{w}}_{\rm L}}\left[ k \right] = \left[ {\begin{array}{*{20}{c}}
{{{\bf{w}}_{s}^{{1^{\rm{T}}}}}\left[ k \right]}&{, \ldots ,}&{{{\bf{w}}_{s}^{{N_{\rm{T}}}^{\rm{T}}}}\left[ k \right]}
\end{array}} \right]$ for $k=k_{0}, k_{1}$.

Now we turn to the procedure of channel estimation.   First, ${\bf{x}}_{{\rm{L,1}}}$ and ${\bf{x}}_{{\rm{L,2}}}{\rm{ }}$,  are deemed  as the candidate weight vectors for estimating. We then consider the sample covariance matrix  given by ${{\bf{C}}_{{{\bf{Y}}_{\rm L}}}} = \frac{1}{{{N_{\rm{T}}}s}}{{\bf{Y}}_{\rm L}}{\bf{Y}}_{\rm L}^{\rm{H}}$ and  finally derive  the  asymptotically-optimal linear minimum mean square error (LMMSE) estimators as
${{\bf{W}}_{{\rm{B}},{\rm{L}}}} = {T_{\rm{B}}} {\bf{x}}_{{\rm{L,1}}}^{\rm{H}}{\bf{C}}_{{{\bf{Y}}_{\rm{L}}}}^{ - 1}$ and
${{\bf{W}}_{{\rm{A}},{\rm{L}}}} = {T_{\rm{A}}} {\bf{x}}_{{\rm{L,2}}}^{\rm{H}}{\bf{C}}_{{{\bf{Y}}_{\rm{L}}}}^{ - 1},$
  where ${T_{\rm{B}}} \buildrel \Delta \over = \frac{{{\rm{Tr}}\left( {{{\bf{R}}_{{\rm{B}},{\rm{L}}}}} \right){\rm{Tr}}\left( {{{\bf{R}}_{\rm{F}}}} \right)}}{{{N_{\rm{T}}}s}}$ and ${T_{\rm{A}}} \buildrel \Delta \over = \frac{{{\rm{Tr}}\left( {{{\bf{R}}_{{\rm{A}},{\rm{L}}}}} \right){\rm{Tr}}\left( {{{\bf{R}}_{\rm{F}}}} \right)}}{{{N_{\rm{T}}}q}}$. Here,  there exists ${\rm{Tr}}\left( {{{\bf{R}}_{{\rm{B,L}}}}} \right) = {\rm{Tr}}\left( {{{\bf{R}}_{{\rm{A,L}}}}} \right)=N_{\rm T}$ and therefore we could define ${T_{\rm{B}}} = {T_{\rm{A}}} = T$.

  The estimated versions of FS channels are  respectively  derived  as
   \begin{equation}\label{E.27}
   {\widehat {\bf{h}}_{{\rm{B}},{\rm{L}}}}{\rm{ = }}{{\bf{W}}_{{\rm{B}},{\rm{L}}}}{{\bf{Y}}_{\rm{L}}}, {\widehat {\bf{h}}_{{\rm{A}},{\rm{L}}}}{\rm{ = }}{{\bf{W}}_{{\rm{A}},{\rm{L}}}}{{\bf{Y}}_{\rm{L}}}
    \end{equation}
  The normalized mean square error (NMSE)  for  the two estimations are respectively defined by $
\varepsilon _{\rm{B}}^2 = \frac{{{\mathbb E}\left\{ {{{\left\| {{{\widehat {\bf{h}}}_{\rm{B,L}}} - {{\bf{h}}_{\rm{B,L}}}} \right\|}^2}} \right\}}}{{{N_{\rm{T}}}s}},\varepsilon _{\rm{A}}^2 = \frac{{{\mathbb E}\left\{ {{{\left\| {{{\widehat {\bf{h}}}_{\rm{A,L}}} - {{\bf{h}}_{\rm{A,L}}}} \right\|}^2}} \right\}}}{{{N_{\rm{T}}}s}}$
Furthermore, the relationship between the ideal channels  with estimated versions can be given by
 ${{\bf{h}}_{\rm{B,L}}} = {\widehat {\bf{h}}_{\rm{B,L}}} + {\varepsilon _{\rm{B}}}{\bf{h}}$ and ${{\bf{h}}_{\rm{A,L}}} = {\widehat {\bf{h}}_{\rm{A,L}}} + {\varepsilon _{\rm{A}}}{\bf{h}}^{'}$
 where $ {\varepsilon _{\rm{B}}}{\bf{h}}$  is uncorrelated with ${{\bf{h}}_{\rm{B,L}}}$ and $ {\varepsilon _{\rm{A}}}{\bf{h}}^{'}$  is uncorrelated with ${{\bf{h}}_{\rm{A,L}}}$. Here, the entries of ${\bf{h}}$  and ${\bf{h}}^{'}$ are i.i.d zero-mean complex Gaussian vectors with each  element having unity variance.

 Based on above results, we could have the following proposition:
  \begin{proposition}
  With the large-scale antenna  array, there exists $\varepsilon _{\rm{B}}^2 = \varepsilon _{\rm{A}}^2$ at  high SNR .
    \end{proposition}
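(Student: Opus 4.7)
The plan is to invoke the large-$N_{\rm T}$ asymptotic behavior of the sample covariance $\mathbf{C}_{\mathbf{Y}_{\rm L}}$ and then exploit the symmetry that emerges in the LMMSE weights at high SNR. First I would show that as $N_{\rm T}\to\infty$, the sample covariance concentrates around its expectation:
\begin{equation*}
\mathbf{C}_{\mathbf{Y}_{\rm L}} = \tfrac{1}{N_{\rm T}s}\mathbf{Y}_{\rm L}\mathbf{Y}_{\rm L}^{\rm H} \;\longrightarrow\; T\,\mathbf{X}_{\rm L}\mathbf{X}_{\rm L}^{\rm H} + \sigma^2 \mathbf{I}_2,
\end{equation*}
using the same Marchenko--Pastur-type concentration already cited in the paper (Hoydis). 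The cross terms $\mathbf{H}_{\rm L}\mathbf{N}_{\rm L}^{\rm H}/(N_{\rm T}s)$ vanish by independence between channel and noise, while the diagonal signal block collapses to $T\mathbf{I}_2$ because $\mathbf{h}_{\rm B,L}$ and $\mathbf{h}_{\rm A,L}$ are mutually independent with equal normalization $\mathrm{Tr}(\mathbf{R}_{\rm B,L})=\mathrm{Tr}(\mathbf{R}_{\rm A,L})=N_{\rm T}$.

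At high SNR, the additive $\sigma^2\mathbf{I}_2$ becomes negligible next to $T\mathbf{X}_{\rm L}\mathbf{X}_{\rm L}^{\rm H}$, so $\mathbf{C}_{\mathbf{Y}_{\rm L}}^{-1}\to T^{-1}(\mathbf{X}_{\rm L}\mathbf{X}_{\rm L}^{\rm H})^{-1}$ and the LMMSE weights simplify to a zero-forcing-like form, namely $\mathbf{W}_{\rm B,L}\to \mathbf{x}_{\rm L,1}^{\rm H}(\mathbf{X}_{\rm L}\mathbf{X}_{\rm L}^{\rm H})^{-1}$ and $\mathbf{W}_{\rm A,L}\to \mathbf{x}_{\rm L,2}^{\rm H}(\mathbf{X}_{\rm L}\mathbf{X}_{\rm L}^{\rm H})^{-1}$. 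Using the fact that $\mathbf{x}_{\rm L,1}$ and $\mathbf{x}_{\rm L,2}$ are the two columns of $\mathbf{X}_{\rm L}$, one has $\mathbf{X}_{\rm L}^{-1}\mathbf{x}_{\rm L,j}=\mathbf{e}_j$, so the signal part of each estimator recovers the true channel exactly and the residual is purely noise-driven. Computing the expectation of the squared residual, together with $\mathbb{E}[\mathbf{N}_{\rm L}\mathbf{N}_{\rm L}^{\rm H}]=N_{\rm T}s\sigma^2\mathbf{I}_2$, yields
\begin{equation*}
\varepsilon_{\rm B}^2 = \sigma^2\,\mathbf{x}_{\rm L,1}^{\rm H}(\mathbf{X}_{\rm L}\mathbf{X}_{\rm L}^{\rm H})^{-2}\mathbf{x}_{\rm L,1},\qquad \varepsilon_{\rm A}^2 = \sigma^2\,\mathbf{x}_{\rm L,2}^{\rm H}(\mathbf{X}_{\rm L}\mathbf{X}_{\rm L}^{\rm H})^{-2}\mathbf{x}_{\rm L,2}.
\end{equation*}

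The final step is the algebraic verification that these two quadratic forms coincide. By Assumption 2, $|x_{\rm B}[k]|=\sqrt{\rho_{\rm B}}$ and $|x_{\rm A}^{i}[k]|=\sqrt{\rho_{\rm A}}$ are constant in $k$, so both diagonals of $\mathbf{X}_{\rm L}\mathbf{X}_{\rm L}^{\rm H}$ equal $\rho_{\rm B}+\rho_{\rm A}$, giving the Hermitian matrix a circulant $2\times 2$ structure that is inherited by its inverse and square. Combined with Assumption 3 (the fixed phase gap $\overline{\phi}$ between $x_{\rm B}[k_0]$ and $x_{\rm B}[k_1]$, which makes $\bar a b$ and $\bar c d$ conjugate-symmetric with the off-diagonal of $\mathbf{X}_{\rm L}\mathbf{X}_{\rm L}^{\rm H}$), a direct expansion of the two forms shows that the asymmetric cross terms cancel, leaving the two residual powers equal.

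The main obstacle is bookkeeping the order of limits and isolating the cancellation cleanly: one must first take $N_{\rm T}\to\infty$ so that the sample-covariance concentration holds in operator norm, and only afterwards push $\sigma^2\to 0$, otherwise the $\sigma^2\mathbf{I}_2$ perturbation can dominate $T\mathbf{X}_{\rm L}\mathbf{X}_{\rm L}^{\rm H}$ when the latter is near-singular. A secondary technical hurdle is the explicit $2\times 2$ inverse computation, where one has to verify that the off-diagonal of $\mathbf{X}_{\rm L}\mathbf{X}_{\rm L}^{\rm H}$, given by $x_{\rm B}[k_0]\overline{x_{\rm B}[k_1]}+x_{\rm A}[k_0]\overline{x_{\rm A}[k_1]}$, conspires with the equal diagonal entries so that $\mathbf{x}_{\rm L,1}^{\rm H}(\mathbf{X}_{\rm L}\mathbf{X}_{\rm L}^{\rm H})^{-2}\mathbf{x}_{\rm L,1}$ and $\mathbf{x}_{\rm L,2}^{\rm H}(\mathbf{X}_{\rm L}\mathbf{X}_{\rm L}^{\rm H})^{-2}\mathbf{x}_{\rm L,2}$ coincide in the impersonation-symmetric regime that this architecture is designed to defeat.
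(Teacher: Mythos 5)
Your route matches the paper's up to the last step: the paper also invokes the Hoydis-type deterministic equivalent $\mathbf{C}_{\mathbf{Y}_{\rm L}}\to \frac{1}{N_{\rm T}s}\mathbf{X}_{\rm L}\mathbf{R}_{\rm C}\mathbf{X}_{\rm L}^{\rm H}+\sigma^2\mathbf{I}_2$ and then drops the noise term at high SNR. Where you diverge is in what quantity you finally compare. The paper stays inside the LMMSE error formula, $\varepsilon_j^2=T\bigl(1-\mathbf{x}_{{\rm L},j}^{\rm H}(\mathbf{X}_{\rm L}\mathbf{X}_{\rm L}^{\rm H})^{-1}\mathbf{x}_{{\rm L},j}\bigr)$, and since $\mathbf{x}_{{\rm L},j}=\mathbf{X}_{\rm L}\mathbf{e}_j$ with $\mathbf{X}_{\rm L}$ square and invertible, the identity $\mathbf{X}_{\rm L}^{\rm H}(\mathbf{X}_{\rm L}\mathbf{X}_{\rm L}^{\rm H})^{-1}\mathbf{X}_{\rm L}=\mathbf{I}_2$ makes both quadratic forms equal to $1$, so both errors collapse to the same value and the proposition follows.

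Your final step, by contrast, contains a genuine error. You compute the zero-forcing residual power $\sigma^2\,\mathbf{x}_{{\rm L},j}^{\rm H}(\mathbf{X}_{\rm L}\mathbf{X}_{\rm L}^{\rm H})^{-2}\mathbf{x}_{{\rm L},j}$ and claim the two forms coincide because $\mathbf{X}_{\rm L}\mathbf{X}_{\rm L}^{\rm H}$ has equal diagonal entries $\rho_{\rm B}+\rho_{\rm A}$. But
\begin{equation*}
\mathbf{x}_{{\rm L},j}^{\rm H}(\mathbf{X}_{\rm L}\mathbf{X}_{\rm L}^{\rm H})^{-2}\mathbf{x}_{{\rm L},j}=\bigl[(\mathbf{X}_{\rm L}^{\rm H}\mathbf{X}_{\rm L})^{-1}\bigr]_{jj},
\end{equation*}
and it is $\mathbf{X}_{\rm L}^{\rm H}\mathbf{X}_{\rm L}$, not $\mathbf{X}_{\rm L}\mathbf{X}_{\rm L}^{\rm H}$, whose diagonal matters: its entries are $2\rho_{\rm B}$ and $2\rho_{\rm A}$, so the two residual powers are $2\sigma^2\rho_{\rm A}/\det$ and $2\sigma^2\rho_{\rm B}/\det$ respectively --- the familiar fact that a ZF estimate degrades with the \emph{other} user's relative power. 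Carrying out the expansion you defer, the difference of the two forms is $2(\rho_{\rm B}-\rho_{\rm A})(|c|^2-d^2)/(d^2-|c|^2)^2\neq 0$ whenever $\rho_{\rm B}\neq\rho_{\rm A}$; only the $\cos$-type cross terms cancel, not the power-dependent ones. Your conclusion can still be rescued, since both residuals are $O(\sigma^2)$ and therefore both tend to the same (zero) limit as SNR grows --- which is essentially all the paper's identity establishes --- but the cancellation you name as the crux of the argument does not occur, and the proof as written would fail the moment you performed the $2\times2$ inversion explicitly.
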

 \begin{IEEEproof}
See proof in Appendix~\ref{appendices_pro2}
 \end{IEEEproof}

\subsection{Identification Enhancement}
Identification enhancement in this section means reducing IEP further. Since Bob could  get two confusing pilots and two confusing estimated channels, we model the process of identification enhancement  as a decision between two hypothesis:
 \begin{equation}\label{E.33}
 \begin{array}{l}
{{\cal H}_{\rm{0}}}:{\widehat {\bf{h}}_{{\rm{B}},{\rm{L}}}} \to Bob,
{\cal {H}_{\rm{1}}}:{\widehat {\bf{h}}_{{\rm{A}},{\rm{L}}}} \to Bob
\end{array}
\end{equation}
 For the sake of simplicity, we define the following  eigenvalue decomposition:
\begin{equation}\label{E.29}
{{\bf{R}}_i} = {{\bf{U}}_i}{{\bf{\Lambda }}_i}{\bf{U}}_i^{\rm{H}},{{\bf{\Lambda }}_i} = diag\left\{ {\begin{array}{*{20}{c}}
{{\lambda _{i,1}}}& \cdots &{{\lambda _{i,{\rho _i}}}}&0& \cdots &0
\end{array}} \right\}
\end{equation}
\begin{equation}\label{E.30}
{\overline {\bf{R}} _i} = {{\bf{U}}_i}{\overline {\bf{\Lambda }} _i}{\bf{U}}_i^{\rm{H}},{\overline {\bf{\Lambda }} _i} = diag\left\{ {\begin{array}{*{20}{c}}
{\lambda _{i,1}^{ - 1}}& \cdots &{\lambda _{i,{\rho _i}}^{ - 1}}&0& \cdots &0
\end{array}} \right\}
\end{equation}
\begin{equation}\label{E.31}
{{\bf{R}}_{\rm{F}}} = {{\bf{V}}_{\rm{f}}}{{\bf{\Sigma }}_{\rm{f}}}{\bf{V}}_{\rm{f}}^{\rm{H}},{{\bf{\Sigma }}_{\rm{f}}} = diag\left\{ {\begin{array}{*{20}{c}}
{{\lambda _{{\rm{f}},1}}}& \cdots &{{\lambda _{{\rm{f}},{\rho _{{\rm{f}}}}}}}&0& \cdots &0
\end{array}} \right\}
\end{equation}
\begin{equation}\label{E.32}
{\overline {\bf{R}} _{\rm{F}}} = {{\bf{V}}_{\rm{f}}}{\overline {\bf{\Sigma }} _{\rm{f}}}{\bf{V}}_{\rm{f}}^{\rm{H}},{\overline {\bf{\Sigma }} _{\rm f}} = diag\left\{ {\begin{array}{*{20}{c}}
{\lambda _{{\rm{f}},1}^{ - 1}}& \cdots &{\lambda _{{\rm{f}},{\rho _{{\rm{f}}}}}^{ - 1}}&0& \cdots &0
\end{array}} \right\}
\end{equation}
where there exists ${\bf{R}}_{{\rm F}} {\rm{=}}  {\bf{F}}_{{\rm{L}},{s}}^{\rm{T}}{\bf{F}}_{{\rm{L}},{s}}^{\rm{*}}$.
 The rank of ${{\bf{R}}_i}$ and   ${{\bf{R}} _{\rm{F}}}$ are  respectively denoted by  $\rho_{i}$ and ${\rho _{{\rm{f}}}} = \min \left\{ {s,L} \right\}$.  To identify the two hypothesis, we build up the error decision function as
     \begin{equation}\label{E.34}
\Delta f  \buildrel \Delta \over = f\left( {{{\widehat {\bf{h}}}_{{\rm{B}},{\rm{L}}}}} \right) - f\left( {{{\widehat {\bf{h}}}_{{\rm{A}},{\rm{L}}}}} \right)
  \end{equation}
  where  the function $f$ satisfies $f\left( {\bf{r}} \right) = {\bf{r}}\left( {{\overline {\bf{R}} _1} \otimes {\overline {\bf{R}} _{\rm{F}}}} \right){{\bf{r}}^{\rm{H}}}$.  The function $f$ can be simplified by the following theorem:
\begin{theorem}
When ${N_{\rm{T}}} \to \infty $, the error decision function can be simplified as:
 \begin{equation}\label{E.35}
  \Delta f  = L\left\{ {{\rho _{\rm{1}}} - {\rm{Tr}}\left( {{{\bf{R}}_2}{{\overline {\bf{R}} }_1}} \right)} \right\}
  \end{equation}
 \end{theorem}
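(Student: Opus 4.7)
The plan is to expand the estimated channels using the decompositions $\mathbf{h}_{\mathrm{B,L}} = \widehat{\mathbf{h}}_{\mathrm{B,L}} + \varepsilon_{\mathrm{B}}\mathbf{h}$ and $\mathbf{h}_{\mathrm{A,L}} = \widehat{\mathbf{h}}_{\mathrm{A,L}} + \varepsilon_{\mathrm{A}}\mathbf{h}'$ introduced just before the statement, substitute into the quadratic form $f$, and then take the large-antenna limit so that a trace-lemma type identity replaces each quadratic form by the trace of the weighting matrix times the covariance. Proposition 2 guarantees $\varepsilon_{\mathrm{B}}^2 = \varepsilon_{\mathrm{A}}^2$ at high SNR as $N_{\mathrm{T}}\to\infty$, so the residual error contributions are either negligible or cancel between the two summands of $\Delta f$, leaving only the true-channel quadratic forms.

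The first real step is to identify the covariance of the stacked row vector $\mathbf{h}_{\mathrm{B,L}}$. Because the $L$ channel taps at each antenna are independent with normalized PDP and the inter-antenna correlation per tap is $\mathbf{R}_1$, the $N_{\mathrm{T}}s$-dimensional vector $\mathbf{h}_{\mathrm{B,L}}^{\mathrm{T}}$ has the Kronecker-structured covariance $\mathbf{R}_1 \otimes \mathbf{R}_{\mathrm{F}}$, and similarly $\mathbf{h}_{\mathrm{A,L}}^{\mathrm{T}}$ has covariance $\mathbf{R}_2 \otimes \mathbf{R}_{\mathrm{F}}$. Next I would invoke the standard trace lemma for quadratic forms of Gaussian vectors in large dimension, so that $\widehat{\mathbf{h}}_{\mathrm{B,L}}(\overline{\mathbf{R}}_1 \otimes \overline{\mathbf{R}}_{\mathrm{F}})\widehat{\mathbf{h}}_{\mathrm{B,L}}^{\mathrm{H}}$ concentrates to $\mathrm{Tr}\bigl[(\overline{\mathbf{R}}_1 \otimes \overline{\mathbf{R}}_{\mathrm{F}})(\mathbf{R}_1 \otimes \mathbf{R}_{\mathrm{F}})\bigr]$, and similarly for the Ava term with $\mathbf{R}_1$ replaced by $\mathbf{R}_2$.

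From there the computation is mechanical: using $\mathrm{Tr}(A\otimes B) = \mathrm{Tr}(A)\mathrm{Tr}(B)$ and the fact that $\overline{\mathbf{R}}_i\mathbf{R}_i$ is the orthogonal projector onto the range of $\mathbf{R}_i$ (so its trace equals the rank $\rho_i$), the Bob term collapses to $\rho_1 \cdot \mathrm{Tr}(\overline{\mathbf{R}}_{\mathrm{F}}\mathbf{R}_{\mathrm{F}}) = \rho_1 \rho_{\mathrm{f}}$ and the Ava term to $\mathrm{Tr}(\overline{\mathbf{R}}_1\mathbf{R}_2)\rho_{\mathrm{f}}$. Invoking the choice $s=L$ that was declared optimal in Theorem 1 forces $\rho_{\mathrm{f}} = \min\{s,L\} = L$, and subtracting the two terms produces $\Delta f = L\bigl\{\rho_1 - \mathrm{Tr}(\mathbf{R}_2\overline{\mathbf{R}}_1)\bigr\}$, as claimed.

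The hard part is the rank-deficiency bookkeeping. Both $\mathbf{R}_1$ and $\mathbf{R}_{\mathrm{F}}$ are singular, so $\overline{\mathbf{R}}_1$ and $\overline{\mathbf{R}}_{\mathrm{F}}$ must be treated as Moore--Penrose pseudo-inverses via the spectral decompositions displayed in the excerpt, and the trace lemma has to be applied on the range space spanned by the truncated eigenbases $\mathbf{U}_i(:,1{:}\rho_i)$ and $\mathbf{V}_{\mathrm{f}}(:,1{:}\rho_{\mathrm{f}})$ to avoid dividing by the zero eigenvalues. A secondary technical point is verifying that the residual terms involving $\varepsilon_{\mathrm{B}}\mathbf{h}$ and $\varepsilon_{\mathrm{A}}\mathbf{h}'$ are indeed asymptotically negligible: their cross terms vanish in expectation by the uncorrelatedness assertion stated with $(\ref{E.27})$, while their own quadratic parts contribute an equal amount to both $f(\widehat{\mathbf{h}}_{\mathrm{B,L}})$ and $f(\widehat{\mathbf{h}}_{\mathrm{A,L}})$ by Proposition 2 and therefore cancel exactly in the difference $\Delta f$.
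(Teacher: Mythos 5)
Your proposal follows essentially the same route as the paper's own proof: expand $f(\widehat{\bf h}_{{\rm B},{\rm L}})$ and $f(\widehat{\bf h}_{{\rm A},{\rm L}})$ via the decompositions ${\bf h}_{{\rm B},{\rm L}}=\widehat{\bf h}_{{\rm B},{\rm L}}+\varepsilon_{\rm B}{\bf h}$ and ${\bf h}_{{\rm A},{\rm L}}=\widehat{\bf h}_{{\rm A},{\rm L}}+\varepsilon_{\rm A}{\bf h}'$, apply the asymptotic trace approximation to each term so the cross terms vanish and the Kronecker traces factor into $\rho_1 L$ and $L\,{\rm Tr}({\bf R}_2\overline{\bf R}_1)$, and invoke Proposition 2 so the residual-error contributions cancel in the difference. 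Your added remarks on the pseudo-inverse/rank bookkeeping and on $\rho_{\rm f}=\min\{s,L\}=L$ only make explicit what the paper leaves implicit, so the argument is correct and matches the paper's.
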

 \begin{IEEEproof}
See proof in Appendix~\ref{Theorem4}
\end{IEEEproof}
Examining this equation, we could find the pilot scheduling strategies of Ava across subcarriers do not affect the decision function.
In what follows, we try to further acquire the characteristic of $  \Delta f $ from the observation of  ${{\bf{R}}_1}$ and ${{\bf{R}}_2}$.

 \subsubsection{Hints Derived from  Spatial Correlation}
  \begin{algorithm}[!t]\label{CEAIE}
\caption{:Channel  Estimation and  Identification  Enhancement}
\begin{algorithmic}[1]
\STATE Identify whether attack happens through  the codewords derived by using inner product in~\cite{Xu}.
\STATE
 If  attack  happens, calculate the sample covariance matrix  ${{\bf{C}}_{{{\bf{Y}}_{\rm L}}}} = \frac{1}{{{N_{\rm{T}}}s}}{{\bf{Y}}_{\rm L}}{\bf{Y}}_{\rm L}^{\rm{H}}$. Derive the two  pilot signal vectors ${\bf{x}}_{{\rm{L,1}}}$ and ${\bf{x}}_{{\rm{L,2}}}$.   Calculate the weight matrices and  finally derive the FS channel estimations using Eq.~(\ref{E.27}).  If no attack happens, just use LS estimator to get FS channels.
\STATE
If no attack  happens, directly derive CIR estimation using estimated FS channels,  otherwise,  calculate $\Delta f $ using Eq.~(\ref{E.34}). According to Theorem 4, if $\Delta f>0 $,  ${\widehat {\bf{h}}_{{\rm{B}},{\rm{L}}}}$ serves as  the true estimated FS channel of Bob for further CIR estimating, otherwise $\Delta f<0 $,   ${\widehat {\bf{h}}_{{\rm{A}},{\rm{L}}}}$ does.  When $\Delta f=0 $, an identification error happens and the  reliability breaks down.
\end{algorithmic}
\end{algorithm}

 The authors in~\cite{Adhikary} pointed out that the set of eigenvalues of ${{\bf{R}}_i}$ and the set of uniformly spaced samples $\left\{ {S_{i}\left( {{n \mathord{\left/
 {\vphantom {n {{N_{\rm{T}}}}}} \right.
 \kern-\nulldelimiterspace} {{N_{\rm{T}}}}}} \right):n = 0, \ldots ,{N_{\rm{T}}} - 1} \right\}$ are asymptotically equally distributed, i.e., for any continuous function $f\left( x \right)$. The function $S_{i}\left( x \right)$ over $x \in \left[ { - \frac{1}{2},\frac{1}{2}} \right]$ satisfies:
${S_i}\left( x \right) = \frac{1}{{2\Delta }}\sum\limits_{0 \in \left[ {D\sin \left( {{\theta _i} - \Delta } \right) + x,D\sin \left( {{\theta _i} + \Delta } \right) + x} \right]} {\frac{1}{{\sqrt {{D^2} - {x^2}} }}}$. And the channel covariance eigenvectors ${{\bf{U}}_i}$, i.e., $N_{{\rm T}} \times {\rho _{{i}}}$ matrix ${{{\overline {\bf{U}} }_i}}$, can be approximated with a submatrix of the DFT matrix ${\bf{F}}$ in the following sense:
$\mathop {\lim }\limits_{N_{{\rm T}} \to \infty } \frac{1}{{{N_{\rm{T}}}}}\left\| {{{\overline {\bf{U}} }_i}\overline {\bf{U}} _i^{\rm{H}} - {{\bf{F}}_{{{\cal S}_i}}}{\bf{F}}_{{{\cal S}_i}}^{\rm{H}}} \right\|_{\rm{F}}^2 = 0,i = 1,2$
where ${{\bf{F}}_{{{\cal S}_i}}} = \left( {{{\bf{f}}_n}:n \in {{\cal J}_{{{\cal S}_i}}}} \right)$ with
${{\cal J}_{{{\cal S}_i}}} = \left\{ {n,\left[ {{n \mathord{\left/
 {\vphantom {n {{N_{\rm{T}}}}}} \right.
 \kern-\nulldelimiterspace} {{N_{\rm{T}}}}}} \right] \in {{\cal S}_i},n = 0, \ldots ,{N_{\rm{T}}} - 1} \right\}$
  Here,  ${{\cal S}_i}$ denotes the support of ${S}_{i}\left( x \right)$.
Backing to the Eq.~(\ref{E.35}),  the trace function satisfies  ${\rm{Tr}}\left( {{{\bf{R}}_2}{{\overline {\bf{R}} }_1}} \right) \le {\rm{Tr}}\left( {{{\bf{\Lambda }}_2}{\bf{U}}_2^{\rm{H}}{{\bf{U}}_1}{{\overline {\bf{\Lambda }} }_1}} \right)={\rm{Tr}}\left( {{{\bf{\Lambda }}_{2,{\rm{p}}}}\overline {\bf{U}} _2^{\rm{H}}{{\overline {\bf{U}} }_1}{{\overline {\bf{\Lambda }} }_{1,{\rm{p}}}}} \right)$ where ${{\bf{\Lambda }}_{i,{\rm{p}}}} $ and ${\overline {\bf{\Lambda }} _{i,{\rm{p}}}}$ are respectively defined by ${{\bf{\Lambda }}_{i,{\rm{p}}}} = diag\left\{ {\begin{array}{*{20}{c}}
{{\lambda _{i,1}}}& \cdots &{{\lambda _{i,{\rho _i}}}}
\end{array}} \right\}$ and ${\overline {\bf{\Lambda }} _{i,{\rm{p}}}} = diag\left\{ {\begin{array}{*{20}{c}}
{\lambda _{i,1}^{ - 1}}& \cdots &{\lambda _{i,{\rho _i}}^{ - 1}}
\end{array}} \right\}$. As previously discussed, we  approximate  ${\overline {\bf{U}} _2^{\rm{H}}{{\overline {\bf{U}} }_1}}$ using  ${\bf{F}}_{{{\cal S}_2}}^{\rm{H}}{{\bf{F}}_{{{\cal S}_1}}}$ and define  ${{\cal S}_1} \cap {{\cal S}_2} = {{\cal S}_3}$.  We  then discuss the influence of  ${{\cal S}_3}$ on ${\rm{Tr}}\left( {{{\bf{\Lambda }}_{2,{\rm{p}}}}\overline {\bf{U}} _2^{\rm{H}}{{\overline {\bf{U}} }_1}{{\overline {\bf{\Lambda }} }_{1,{\rm{p}}}}} \right)$.  When ${{\cal S}_3} = \emptyset $, we can have ${\rm{Tr}}\left( {{{\bf{R}}_2}{{\overline {\bf{R}} }_1}} \right)=0$. When ${{\cal S}_3}\ne \emptyset$, we assume ${{\cal S}_3} = {\cal P}_{a}$ and have
  \begin{equation}\label{E.39}
{\rm{Tr}}\left( {{{\bf{\Lambda }}_{2,{\rm{p}}}}\overline {\bf{U}} _2^{\rm{H}}{{\overline {\bf{U}} }_1}{{\overline {\bf{\Lambda }} }_{1,{\rm{p}}}}} \right) \le \sum\limits_{j = 1}^a {\frac{{{\lambda _{2,{i_j}}}}}{{{\lambda _{1,{i_j}}}}}}
  \end{equation}
 This is because  the eigenvectors labeled by the indexes out of  the interacted set ${{\cal S}_3}$ are mutually orthogonal~\cite{Adhikary}.
Then we have the following theorem:
  \begin{theorem}
 When ${N_{\rm{T}}} \to \infty $,  there always exists $\sum\limits_{j = 1}^a {\frac{{{\lambda _{2,{i_j}}}}}{{{\lambda _{1,{i_j}}}}}} =  a$.  If ${\theta _1} \ne {\theta _2}$,   there must exist  $a< {\rho _1} $ and $ \Delta f > 0$. Otherwise if $ {\theta _1} = {\theta _2}$, there must exist  $a={\rho _1}$ and $ \Delta f =0$.
\end{theorem}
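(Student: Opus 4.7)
The plan is to use the asymptotic Szeg\H{o}-type eigenvalue approximation cited from~\cite{Adhikary} and push through the three claims in order: first the ratio identity, then the two case distinctions on the mean AoAs. Throughout, I treat the eigenvalues $\lambda_{i,n}$ as (asymptotically) given by the samples $S_i(n/N_{\rm T})$ of the spectral density function, and the projector $\overline{\bf U}_i\overline{\bf U}_i^{\rm H}$ as (asymptotically) the DFT submatrix ${\bf F}_{{\cal S}_i}{\bf F}_{{\cal S}_i}^{\rm H}$ supported on the frequency set ${\cal J}_{{\cal S}_i}$.

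First I would establish the ratio identity. The key observation is that $S_i(x)$ depends on $\theta_i$ only through the indicator of the support interval ${\cal S}_i$: on its support, $S_i(x) = \frac{1}{2\Delta\sqrt{D^2-x^2}}$, a function that has no $\theta_i$ dependence. Consequently, on the intersection ${\cal S}_3={\cal S}_1\cap{\cal S}_2$, one has $S_1(x)=S_2(x)$ pointwise. Transporting this to the eigenvalues via the asymptotic equidistribution, each common index $i_j\in {\cal S}_3$ satisfies $\lambda_{2,i_j}/\lambda_{1,i_j}\to 1$ as $N_{\rm T}\to\infty$, so $\sum_{j=1}^{a}\lambda_{2,i_j}/\lambda_{1,i_j}= a$. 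In the same limit, the inequality in~(\ref{E.39}) tightens to an equality, because the off-support DFT columns are asymptotically orthogonal to the on-support ones, so the only surviving contribution in ${\rm Tr}({\bf R}_2\overline{\bf R}_1)$ comes from the $a$ common frequency bins. Hence ${\rm Tr}({\bf R}_2\overline{\bf R}_1)\to a$.

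Next I would treat the case $\theta_1\neq\theta_2$. Since both supports are intervals of the same length $2\Delta$ in $x$ (after the $D\sin(\cdot)$ change of variables), shifting the center strictly moves one interval relative to the other, producing ${\cal S}_3\subsetneq {\cal S}_1$ and hence $a<\rho_1$. Substituting into the simplified expression from Theorem~4,
\begin{equation}
\Delta f = L\bigl\{\rho_1 - {\rm Tr}({\bf R}_2\overline{\bf R}_1)\bigr\} = L(\rho_1 - a) > 0.
\end{equation}
For the complementary case $\theta_1=\theta_2$, the supports coincide, so ${\cal S}_3={\cal S}_1={\cal S}_2$, giving $a=\rho_1=\rho_2$ and ${\rm Tr}({\bf R}_2\overline{\bf R}_1)=\rho_1$, whence $\Delta f=0$.

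The main obstacle I foresee is justifying that the inequality in~(\ref{E.39}) becomes an equality in the limit, not just an upper bound. This requires a careful use of the Szeg\H{o}-type result: one must show that ${\bf F}_{{\cal S}_2}^{\rm H}{\bf F}_{{\cal S}_1}$ is asymptotically the identity on the common frequency bins in ${\cal J}_{{\cal S}_3}$ and is asymptotically zero off them, so that the cross term in ${\rm Tr}({\bf R}_2\overline{\bf R}_1)$ reduces exactly to the diagonal sum $\sum_{j=1}^{a}\lambda_{2,i_j}/\lambda_{1,i_j}$. Once this spectral leakage is shown to vanish, the strict inequality in the $\theta_1\neq\theta_2$ case follows immediately from the geometric fact that two distinct translates of an interval of length $2\Delta$ cannot coincide, and the rest of the theorem is a direct substitution into Theorem~4.
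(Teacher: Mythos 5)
Your proposal is correct and follows essentially the same route as the paper: the Szeg\H{o}-type equidistribution of the eigenvalues with the samples of $S_i$ gives the ratio identity, the shift of the length-$2\Delta$ support intervals gives $a<\rho_1$ when $\theta_1\ne\theta_2$, and substitution into the simplified $\Delta f$ finishes the argument. The only difference is that you labor to show the inequality in~(\ref{E.39}) is asymptotically tight, which is not actually needed: for $\theta_1\ne\theta_2$ the upper bound ${\rm Tr}({\bf R}_2\overline{\bf R}_1)\le a<\rho_1$ already forces $\Delta f>0$, and for $\theta_1=\theta_2$ the paper simply observes ${\bf R}_1={\bf R}_2$, whence ${\rm Tr}({\bf R}_1\overline{\bf R}_1)=\rho_1$ directly.
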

\begin{IEEEproof}
See proof in Appendix~\ref{Theorem5}
 \end{IEEEproof}
 \subsubsection{IEP Reduction}
  Inspired by the above result, we know that the identification error  happens only  when  ${\theta _1} = {\theta _2}$.
  \begin{theorem}
 Under the assumption of  mean AoA obeying  continuous probability distribution (CPD), the IEP ${{P}_{\rm{I}}} $ in Eq. (9)  is updated  to be  zero.
 Under the assumption of  mean AoA obeying  distrete probability distribution (DPD), for instance, uniform distribution with interval length $K$,   the IEP ${{P}_{\rm{I}}} $  in Eq. (9) is updated  to be $\frac{{{P_{\rm{I}}}}}{K}$.
  \end{theorem}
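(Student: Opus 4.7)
My plan is to chain Theorem~5 with the previous IEP expression from Eq.~(9). Theorem~5 tells us that, whenever a two-codeword confusion actually occurs, the identification enhancement based on $\Delta f$ fails (i.e., $\Delta f = 0$) if and only if the two mean AoAs satisfy $\theta_1 = \theta_2$; otherwise $\Delta f > 0$ and the correct codeword is always picked. So after the enhancement stage is applied, an identification error requires the joint occurrence of (i) the codeword-level ambiguity event whose probability is $P_{\rm I}$ from Eq.~(9), and (ii) the AoA coincidence event $\{\theta_1 = \theta_2\}$. Because pilot randomization is independent of the scatterer geometry, these two events are statistically independent, so the updated IEP factors as $P_{\rm I}^{\rm new} = P_{\rm I}\cdot \Pr(\theta_1 = \theta_2)$. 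The remaining task is just to compute this coincidence probability under each of the two distributional assumptions.

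For the CPD case, I would model Bob's and Ava's mean AoAs as independent continuous random variables taking values in the sector $[-\pi/3,\pi/3]$ (further constrained by Assumption~1 to have overlapping angular intervals). The joint law is then absolutely continuous on a two-dimensional region, so the diagonal $\{\theta_1 = \theta_2\}$ has Lebesgue measure zero and $\Pr(\theta_1 = \theta_2) = 0$, immediately giving $P_{\rm I}^{\rm new} = 0$. For the DPD case with a uniform distribution over a grid of $K$ candidate mean AoAs, from which Bob and Ava draw independently, a direct counting argument yields $\Pr(\theta_1 = \theta_2) = \sum_{k=1}^{K} (1/K)^2 = 1/K$, so $P_{\rm I}^{\rm new} = P_{\rm I}/K$, exactly as claimed.

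The main obstacle I anticipate is not the probabilistic arithmetic itself but the modeling carefulness that supports it: I have to justify the independence of the codeword-collision event from the angular coincidence event, and, more delicately, handle the support restriction induced by Assumption~1. In both cases the probability statement should really be read as a conditional one given that Bob's and Ava's angular intervals overlap, and I would want to explicitly check that this conditioning does not disturb either the zero-measure argument (CPD) or the uniform counting (DPD). A secondary subtlety is that Theorem~5, on which everything rests, is an asymptotic statement for $N_{\rm T}\to\infty$, so the conclusions of Theorem~6 should be stated and interpreted as asymptotic IEP values; this is consistent with the large-scale MISO framing of the paper but is worth flagging in the proof.
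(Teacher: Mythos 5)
Your proposal is correct and is essentially the argument the paper intends: you chain the preceding theorem's characterization of identification failure (error possible only when $\theta_1=\theta_2$) with the earlier IEP expression and multiply by the AoA-coincidence probability, which is $0$ under a continuous law and $1/K$ under a uniform discrete law on $K$ points. The paper itself offers no proof beyond the remark that ``the proof is intuitive,'' so your writeup --- including the explicit independence justification and the caveats about the overlap conditioning from Assumption~1 and the $N_{\rm T}\to\infty$ asymptotics inherited from the $\Delta f$ analysis --- is strictly more complete than the original while following the same route.
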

 The proof is institutive. Therefore, the IEP can be seriously reduced and reliability  is thus significantly enhanced  under hybrid attack environment. Finally, we give the overall process of  channel estimation and  identification enhancement  in Algorithm 2.
%\begin{figure}
%\centering \includegraphics[width=0.75\linewidth]{PD_PF.eps}
%\caption{ Performance of ERD for each single subcarrier with various antenna configurations.}
%\vspace{-10pt}
%\label{DP}
%\end{figure}

\begin{figure}
\centering \includegraphics[width=0.75\linewidth]{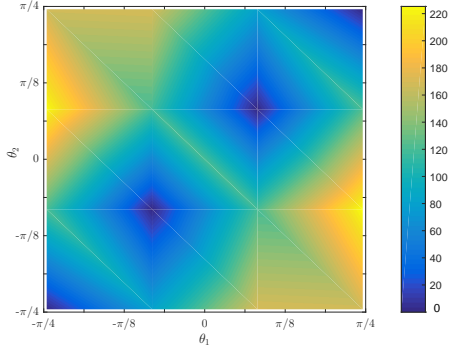}
\caption{Strength  of $\Delta f$ versus $\theta_{i},i=1,2$ with $N_{\rm T}=100$.}
\vspace{-10pt}
\label{Identification}
\end{figure}
\section{Numerical Results}
\label{NR}
In this section, we further  carry out  the performance  evaluation concerning above techniques mentioned.

In this part, we aim to verify the feasibility of Theorem 4 through simulations shown in Fig.~\ref{Identification} where  the strength  of $\Delta f $ is plotted  against  $\theta_{i}, i=1,2$ by configuring $N_{\rm T}=100$ and  $K=5$. In this simulation, we consider  that the candidate samples of  discrete mean AoAs lie within the  set $\left\{ { - \frac{\pi }{4}, - \frac{\pi }{7},0, - \frac{\pi }{7}, - \frac{\pi }{4}} \right\}$. Based on the estimation in Eq.~(\ref{E.27}) and the correlation model in Eq.~(\ref{E.5}), we derive the corresponding examples of   $\Delta f $.  As we can  see,  the identification error happens when $\Delta f =0$, that is, $\theta_{1}=\theta_{2}$.  In this sense, we could envision  that the IEP is zero under the assumption of the mean AoA with CPD.

For the sake of a comprehensive analysis, we consider the DPD model for mean AoA and further simulate the IEP performance in Fig.~\ref{IEP}. The mean AoA is  discretely and  uniformly  distributed in a length-$K$ interval.  As shown in this figure,  the performance  of  IEP is plotted versus the length of $N_{\rm B}$  under different number  $L$ of channel taps.   We consider $L$ to be from 7 to 13 and $K$ to be 20. $k$, related to $N_{\rm B}$, satisfies $N_{\rm B}= 2k+1$.    As we can see,  even with small subcarrier overheads, that is, $N_{\rm B}$ is small,  the IEP can be low and our architecture has  a very reliable  performance  guarantee.
 Moreover, we can find that when  $L$ is low, such as $L=7$,   the IEP has a maximum value after which IEP decreases with the increase of  $N_{\rm B}$.  With the increase of $L$, IEP  decreases monotonically with $N_{\rm B}$.  Furthermore, the initial value of $L$ determines  the   upper bound  IEP can achieve.  With the increase of $L$, the upper bound decreases. For instance,  the upper bound of $P_{\rm I}$  achieves as low as $10^{-3.3}$ at $k=80$ when $L$ is equal to 9. In this case, the number of occupied subcarriers satisfies $N_{\rm B}=161$.
\begin{figure}
\centering \includegraphics[width=0.75\linewidth]{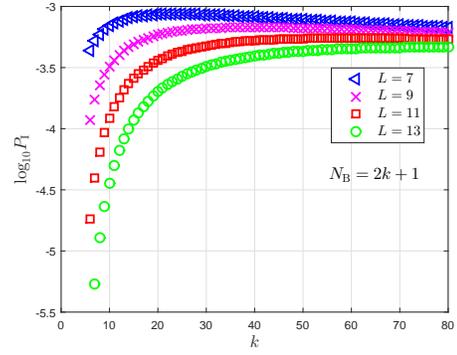}
\caption{Performance of IEP versus $N_{\rm B}$  under different $L$.}
\vspace{-10pt}
\label{IEP}
\end{figure}

Finally, we simulate the performance  of  channel estimation in Fig.~\ref{NMSE} in which   the  NMSE is plotted  versus SNR of Bob under different number of antennas.  $L$ and $N_{\rm B}$ are respectively  configured to be 6 and  256. Here,  we  consider the  estimation shown in Eq.~(\ref{E.27})  and assume perfect identification  for attacks. We do not consider the case where there is no attack since in this case LS estimator is a natural choice. For the simplicity of  comparison, we only present the channel estimation  under PTS attack because  the estimation error floor  under PTN and PTJ attack can be easily understood to be very high. The binned scheme proosed in~\cite{Shahriar2} is simulated as an another  comparison scheme.   As we can see,  PTS attack, if happens,  causes a high-NMSE floor on  CIR estimation  for Bob. This phenomenon can also be seen in the binned scheme. However, the estimation in our proposed framework  breaks down this floor and its NMSE  gradually decreases with the increase of transmitting antennas.  Also, we consider perfect MMSE to be  a performance benchmark for which  perfect pilot tones, including Ava's pilot tones,  are assumed to be  known by Alice. We find that  the NMSE brought in our scheme  gradually approaches the level under perfect MMSE with the increase of antennas. That's because the  estimator highly relies on the statistical property of  ${{\bf{C}}_{{{\bf{Y}}_{\rm L}}}}$, determined by the number of antennas.

\begin{figure}
\centering \includegraphics[width=0.75\linewidth]{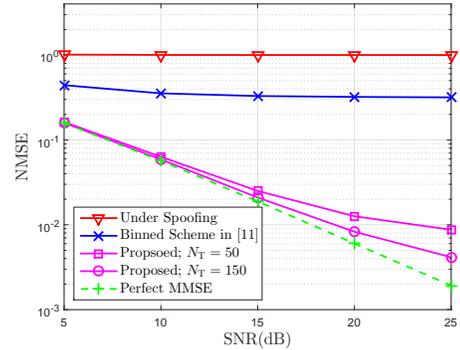}
\caption{NMSE of CIR estimation versus SNR under different number of antennas.}
\vspace{-10pt}
\label{NMSE}
\end{figure}
\section{Conclusions}
\label{Conclusions}
This paper investigated the issue of pilot-aware attack on the uplink channel training process in large-scale MISO-OFDM systems. We proposed a secure ICCB uplink training architecture in which  pilot tones, usually exposed in public, are now enabled  to be shared between legitimate transceiver pair under hybrid attack environment. We developed a novel coding theory to support  and secure this pilot sharing process, and found an optimal code rate to finally provide  the well-imposed CIR estimation. Theoretically, we verified an important fact that this architecture could perfectly secure  pilot sharing  against the attack if the CPD  model of mean AoA was considered. In practical scenarios with DPD model of mean AoA, this architecture  could also bring  a high-reliability and high-precision CIR estimation.
\section{Appendix}
\subsection{Proof of Theorem 1}
\label{Theorem1}
Since codewords in this constant-weight code are constrained to be  with same and  fixed length,  the number of overlapping digits achieves its minimum only when the zero digits of each codeword  are  fully occupied. In this case, the remanent digits, i.e., the overlapping digits,  account for  ${2w-N_{\rm B}}$ which should be  equal to $s$ and less than $w$.  Therefore, we can prove the theorem.
\subsection{Proof of Theorem 2}
\label{Theorem2}
To guarantee well-imposed CIR estimation, there should be $s\ge L$, that is, ${{\bf{F}}_{{\rm{L}}, s}}$ is with full column rank. However, the increase of $s$ will reduce the code rate since the function of $C$ decreases with  $s$. Therefore, the optimal code rate is proved.

Considering  the hybrid attack,  we know that their exists the  possibility  of  $2^{N_{\rm B}}$   codewords to appear.  Two interpreted codewords derived by inner-product operation in ~\cite{Xu},  if satisfying  the same weight constraint, will confuse Alice. In this case,  each assumption is  decided with the probability of $0.5$.   The possible number of codewords that satisfy  this condition is equal to   ${ {\frac{{{N_{\rm B}}!}}{{\left( {\frac{{{N_{\rm B}} + L}}{2}} \right)!\left( {\frac{{{N_{\rm B}} - L}}{2}} \right)!}}}}$.  One exception is when  the codeword of Ava is identical to that of Bob. In this case, the codeword can be uniquely determined.  Finally,  there exists the  possibility of   ${ {\frac{{{N_{\rm B}}!}}{{\left( {\frac{{{N_{\rm B}} + L}}{2}} \right)!\left( {\frac{{{N_{\rm B}} - L}}{2}} \right)!}}}-1}$ codewords that could cause identification errors. Then the ultimate IEP can be proved.
\subsection{Proof of Proposition 2}
\label{appendices_pro2}
Take Bob for example, we can derive the error of MMSE based estimation  as $\varepsilon _{\rm{B}}^2 = T\left( {1 - T{\bf{X}}_{{\rm{L}},{\rm{1}}}^{\rm{H}}{\bf{C}}_{{{\bf{Y}}_{\rm{L}}}}^{ - 1}{{\bf{X}}_{{\rm{L}},{\rm{1}}}}} \right)$.

 ${{\bf{C}}_{{{\bf{Y}}_{\rm L}}}}$  is  transformed into
${{\bf{C}}_{{{\bf{Y}}_{\rm L}}}}  \xlongrightarrow[{N_{\rm T}} \to \infty]{ \rm{a.s.}} \frac{1}{{{N_{\rm{T}}}s}}{{\bf{X}}_{{\rm{L}}}}{{\bf{R}}_{\rm C}}{\bf{X}}_{{\rm{L}}}^{\rm{H}} + {\sigma ^2}{{\bf{I}}_2}$ using  asymptotic approximation~\cite{Hoydis}. Here,  the $2\times 2$  matrix ${{\bf{R}}_{\rm C}}$  satisfies ${{\bf{R}}_{\rm{C}}} = {\rm{diag}}\left\{ {\begin{array}{*{20}{c}}
{{\rm{Tr}}\left( {{{\bf{R}}_{{1}}}} \right){\rm{Tr}}\left( {{{\bf{R}}_{\rm{F}}}} \right)}&{{\rm{Tr}}\left( {{{\bf{R}}_{{2}}}} \right){\rm{Tr}}\left( {{{\bf{R}}_{\rm{F}}}} \right)}
\end{array}} \right\}$.
Therefore, we can derive  $\varepsilon _{\rm{B}}^2 =T\left\{ {1 - {\bf{X}}_{{\rm{L}},{\rm{1}}}^{\rm{H}}{{\left( {{{\bf{X}}_{\rm{L}}}{\bf{X}}_{\rm{L}}^{\rm{H}}} \right)}^{ - 1}}{{\bf{X}}_{{\rm{L}},{\rm{1}}}}} \right\}$  at high SNR region.
In the same way, we can derive  $\varepsilon _{\rm{A}}^2 =T\left\{ {1 - {\bf{X}}_{{\rm{L}},{\rm{2}}}^{\rm{H}}{{\left( {{{\bf{X}}_{\rm{L}}}{\bf{X}}_{\rm{L}}^{\rm{H}}} \right)}^{ - 1}}{{\bf{X}}_{{\rm{L}},{\rm{2}}}}} \right\}$.
After calculating  the matrix inverse and performing matrix multiplication,  we can finally verify $ \varepsilon _{\rm{B}}^2 = \varepsilon _{\rm{A}}^2 $. This completes the proof.
\subsection{Proof of Theorem 3}
\label{Theorem4}
Thanks to ${\widehat {\bf{h}}_{\rm{B,L}}} = {{\bf{h}}_{\rm{B,L}}} - \varepsilon _{{\rm{B}}}{\bf{h}}$, $ f\left( {{{\widehat {\bf{h}}}_{{\rm{B}},{\rm{L}}}}} \right)$ can be expressed as:
${f\left( {{{\widehat {\bf{h}}}_{{\rm{B}},{\rm{L}}}}} \right) = \left( {{{\bf{h}}_{{\rm{B}},{\rm{L}}}} - {\varepsilon _{\rm{B}}}{\bf{h}}} \right)\left( {{{\overline {\bf{R}} }_1} \otimes {{\overline {\bf{R}} }_{\rm{F}}}} \right){{\left( {{{\bf{h}}_{{\rm{B}},{\rm{L}}}} - {\varepsilon _{\rm{B}}}{\bf{h}}} \right)}^{\rm{H}}}}$
then this equation can be expanded into $ f\left( {{{\widehat {\bf{h}}}_{{\rm{B}},{\rm{L}}}}} \right) = {f_1} - 2{f_2} + {f_3}$
where ${f_1} = {{\bf{h}}_{{\rm{B}},{\rm{L}}}}\left( {{{\overline {\bf{R}} }_1} \otimes {{\overline {\bf{R}} }_{\rm{F}}}} \right){\bf{h}}_{{\rm{B}},{\rm{L}}}^{\rm{H}}$ and ${f_2} = {\varepsilon _{\rm{B}}}{{\bf{h}}_{{\rm{B}},{\rm{L}}}}\left( {{{\overline {\bf{R}} }_1} \otimes {{\overline {\bf{R}} }_{\rm{F}}}} \right){\bf{h}}$, ${f_3} = \varepsilon _{\rm{B}}^2{\bf{h}}\left( {{{\overline {\bf{R}} }_1} \otimes {{\overline {\bf{R}} }_{\rm{F}}}} \right){\bf{h}}$.
By using the asymptotic approximation for each term, we can have
$f\left( {{{\widehat {\bf{h}}}_{{\rm{B}},{\rm{L}}}}} \right)  \xlongrightarrow[{N_{\rm T}} \to \infty]{ \rm{a.s.}}  {\rho_{1}}L{\rm{ + }}\varepsilon _{\rm{B}}^2{\rm{Tr}}\left( {{{\overline {\bf{R}} }_1} \otimes {{{\overline {\bf{R}} }_{\rm{F}}}}} \right)$
In the same way, we can simplify  the function $f\left( {{{\widehat {\bf{h}}}_{{\rm{A}},{\rm{L}}}}} \right)$ as:
$f\left( {{{\widehat {\bf{h}}}_{{\rm{A}},{\rm{L}}}}} \right)  \xlongrightarrow[{N_{\rm T}} \to \infty]{ \rm{a.s.}}  L{\rm{Tr}}\left( {{{\bf{R}}_2}{{\overline {\bf{R}} }_1}} \right){\rm{ + }}\varepsilon _{\rm{A}}^2{\rm{Tr}}\left( {{{\overline {\bf{R}} }_1} \otimes {\overline {\bf{R}} }_{\rm{F}} } \right)$
 As indicated in Proposition 2, there exists $\varepsilon _{\rm{B}}^2=\varepsilon _{\rm{A}}^2$. By comparing $f\left( {{{\widehat {\bf{h}}}_{{\rm{B}},{\rm{L}}}}} \right)$ and $f\left( {{{\widehat {\bf{h}}}_{{\rm{A}},{\rm{L}}}}} \right)$, we  can complete the proof.
 \subsection{Proof of Theorem 4}
 \label{Theorem5}
 First, we will prove $\sum\limits_{j = 1}^a {\frac{{{\lambda _{2,{i_j}}}}}{{{\lambda _{1,{i_j}}}}}}  = a$. As shown in~\cite{Adhikary}, the empirical CDF of eigenvalues of ${\bf {R}}_{i}$ can be asymptotically  approximated  using the collection of samples from  $\left\{ {{S_i}\left( {\left[ {{n \mathord{\left/
 {\vphantom {n {{N_{\rm{T}}}}}} \right.
 \kern-\nulldelimiterspace} {{N_{\rm{T}}}}}} \right]} \right),n = 0, \ldots ,{N_{\rm{T}}} - 1} \right\}$. Therefore, the eigenvalues of different individuals, if overlapping at the same location $n$, can be approximated with the same eigenvalues.

 Then we prove that there must $a < {\rho _1}$. Examining $\left[ {{\theta _2} - {\Delta },{\theta _2} + {\Delta }} \right]$ and  $\left[ {{\theta _1} - {\Delta },{\theta _2} + {\Delta }} \right]$ we found  if $ {\theta _1} \ne {\theta _2}$ is satisfied, there must exist $a < {\rho _1}$ since $\left[ {{\theta _2} - {\Delta },{\theta _2} + {\Delta }} \right]$ must have non-empty intersection  with  $\left[ {{\theta _1} - {\Delta },{\theta _1} + {\Delta}} \right]$.  In this case, the  number of elements in ${{\cal S}_{3}}$  is reduced to be smaller than that ${\rho _1}$.
 Now we turn to the case ${\theta _1} = {\theta _2}$ in which we easily have ${{\bf{R}}_1} = {{\bf{R}}_2}$ and therefore the theorem is proved.

%\begin{IEEEbiography}[{\includegraphics[width=1in,height=1.25in,clip,keepaspectratio]{sunli.eps}}]
%\end{IEEEbiography}
% that's all folks
\end{document}